\newcommand{\blind}{0}
\crefname{figure}{Figure}{Figures}
\crefname{equation}{}{}
\crefname{definition}{Definition}{Definitions}
\crefname{corollary}{Corollary}{Corollaries}
\crefname{proposition}{Proposition}{Propositions}
\crefname{theorem}{Theorem}{Theorems}
\crefname{remark}{Remark}{Remarks}
\crefname{principle}{Principle}{Principles}
\crefname{lemma}{Lemma}{Lemmata}
\crefname{claim}{Claim}{Claims}
\crefname{table}{Table}{Tables}
\crefname{section}{Section}{Sections}
\crefname{subsection}{Section}{Sections}
\crefname{subsubsection}{Section}{Section}
\crefname{assumption}{Assumption}{Assumptions}
\crefname{appendix}{Appendix}{Appendices}
\numberwithin{equation}{section}
\numberwithin{theorem}{section}
\begin{document}

\def\spacingset#1{\renewcommand{\baselinestretch}{#1}\small\normalsize} \spacingset{1}

% \date{}

\if0\blind
{
  \title{\bf Inference on the proportion of variance \\explained in principal component analysis}
  \author{Ronan Perry\thanks{
    The authors acknowledge funding from the following sources: NSF DMS 2322920, NSF DMS 2514344, NIH 5P30DA048736, ONR N00014-23-1-2589, and a Simons Investigator Award in Mathematical Modeling of Living Systems to D. Witten. NSF DMS 1951980, DMS 2113342 and NSF CAREER Award DMS-2337882 to S. Panigrahi.
    }\\
    Department of Statistics, University of Washington\\
    and \\
    Snigdha Panigrahi \\
    Department of Statistics, University of Michigan\\
    and \\
    Jacob Bien \\
    Department of Data Sciences and Operations, University of Southern California\\
    and \\
    Daniela Witten \\
    Departments of Statistics and Biostatistics, University of Washington
    }
  \maketitle
} \fi

\if1\blind
{
  \bigskip
  \bigskip
  \bigskip
  \begin{center}
    {\Large\bf Inference on the proportion of variance explained\\ in  principal component analysis}
\end{center}
  \medskip
} \fi

\bigskip

\begin{abstract}
Principal component analysis  (PCA) is a longstanding approach for dimension reduction. It rests upon the assumption that the underlying signal has low rank, and thus can be well-summarized using a small number of dimensions. The output of PCA is typically represented using a scree plot, which displays the proportion of variance explained (PVE) by each principal component. While the PVE is extensively reported in routine analyses, to the best of our knowledge the notion of \emph{inference} on the PVE remains unexplored. 

We consider inference on a new population quantity for the PVE with respect to an unknown matrix mean. Our interest lies in the PVE of the sample principal components (as opposed to unobserved population principal components); thus, the population PVE that we introduce is defined \emph{conditional} on the sample singular vectors.
We show that it is possible to conduct inference, in the sense of confidence intervals, p-values, and point estimates, on this population quantity. Furthermore, we can conduct valid inference on the PVE of a subset of the principal components, even when the subset is selected using a data-driven approach such as the elbow rule. We demonstrate our approach in simulation and in an application to gene expression data.
\end{abstract}

\noindent%
{\it Keywords:} Selective Inference, Principal Component Analysis, Proportion of Variance Explained, Scree Plot, Elbow Rule

\if1\blind
{
\vfill

\newpage
} \fi

\spacingset{1.9} % DON'T change the spacing!

\section{Introduction}

Principal component analysis (PCA)
is a longstanding and well-studied technique for dimension reduction, with applications across disciplines~\citep{kaiser_application_1960, wold_cross-validatory_1978, muirhead_aspects_1982, krzanowski_cross-validation_1987, besse_application_1993, bishop_bayesian_1998, bishop_pca_1999,  tipping_mixtures_1999, jolliffe_principal_2002}. We begin with an $n \times p$ data matrix $X = \begin{pmatrix} X_1 & \ldots & X_p \end{pmatrix}$, consisting of $n$ observations and $p$ features.  
 The first principal component loading vector, $v_1 \in \mathbb{R}^p$, is the unit vector ($\| v_1 \|_2=1)$ that maximizes the sample variance of $Xv_1$.
The second principal component loading vector, $v_2  \in \mathbb{R}^p$, is the unit vector that maximizes the sample variance $Xv_2$, subject to the constraint that $v_1^\top v_2=0$, and so on. 
Once the first $k < \min(n,p)$ principal component loading vectors have been computed, they can serve a number of purposes:
\begin{enumerate}
    \item They can be used for visualization:  rather than plotting the $n$ observations with respect to the original set of features $X_1,\ldots,X_p$, we instead plot $X v_1, X v_2,\ldots, X v_k$~\citep{margulies_situating_2016, becht_dimensionality_2019}.
   \item They can be used as features in an  unsupervised analysis, such as clustering. That is, we can cluster the rows of the  $n \times k$ data matrix $\begin{pmatrix} X v_1,\ldots,X v_k \end{pmatrix}$, rather than the rows of the original $n \times p$ data matrix $X$~\citep{calhoun_method_2001,von_luxburg_tutorial_2007, athreya_statistical_2018}.
        \item 
 They can be used to construct  features for a supervised analysis.  For instance, in principal components regression, we predict an $n$-dimensional response $Y$ on the basis of $X v_1,\ldots,X v_k$, as opposed to predicting $Y$ using the original features $X_1,\ldots,X_p$~\citep{mcgough_modeling_2020, hayes_estimating_2025}.
\end{enumerate}

In each of these cases, we implicitly assume that the first few principal component loading vectors capture the signal in the data, and the  remaining dimensions contain mostly noise.
 The suitability of this assumption on a particular dataset is typically visualized using a scree plot,  which displays  $\sval{X}_l^2 / \sum_{l'=1}^p \sval{X}_{l'}^2$ for $l=1,\ldots,p$, where $\sval{X}_1,\ldots,\sval{X}_p$ represent the singular values of $X$ in decreasing order. This quantity is referred to as the ``proportion of variance explained" (PVE) by the $l$th principal component, and it is often reported in scientific publications~\citep{alter_singular_2000, 
 lukk_global_2010, li_dimension_2010, duforet-frebourg_detecting_2016}. In fact, in many cases, the PVE is the \emph{only} quantitative summary of PCA reported. 
In this paper, we consider inference on the PVE, a problem that --- to our knowledge --- has not been previously explored.  We develop a  ``population" quantity corresponding to the PVE, as well as tools to construct confidence intervals and conduct hypothesis tests for this quantity. 

 While in some cases the PVE of all 
 $\min(n,p)$ principal components may be of interest, in others interest may focus on a small number of principal components, where this number is selected from the data using a formal or heuristic approach. In fact, this number is often selected by examining the scree plot, and looking for an ``elbow" where the  PVE decreases \citep{cattell_scree_1966,  li_dimension_2010, senbabaoglu_critical_2014,  nguyen_ten_2019,  greenacre_principal_2022}. As demonstrated in~\cref{fig:toy_example}, if the number of principal components to be considered is selected from the data, then valid inference for the PVE must account for the selection process; thus, we extend  our approach for inference on the PVE to account for the possibility of data-driven selection of the number of principal components. 
 
Our work builds upon the proposal of \cite{choi_selecting_2017}, which considers inference on the singular values of the data. However, they do not consider the PVE, which is  a quantity of much greater scientific interest, and certainly much more widely-reported, than the singular values themselves. Furthermore, they do not account for the possibility that the number of principal components under consideration was selected from the data. 

We conduct inference on the PVE conditional on the $k$th sample singular vectors. Furthermore, in the setting where the number of principal components under consideration is selected from the data, we condition also on this selection event. The selective inference framework~\citep{fithian_optimal_2017} has been used to advantage in a number of settings involving data-driven selection events, including high-dimensional regression~\citep{lee_exact_2016, panigrahi2023approximate}, hierarchical and $k$-means clustering~\citep{gao_selective_2022, chen_selective_2023, yun_selective_2023}, decision trees~\citep{neufeld_tree_2022}, outlier detection~\citep{chen_valid_2020}, and changepoint detection~\citep{ hyun_exact_2018, hyun_post-selection_2021, jewell_testing_2022}. However, our application of this framework to the setting of inference on the PVE is entirely novel. 

\begin{figure}[!hb]
    \centering
    \includegraphics[width=\textwidth]{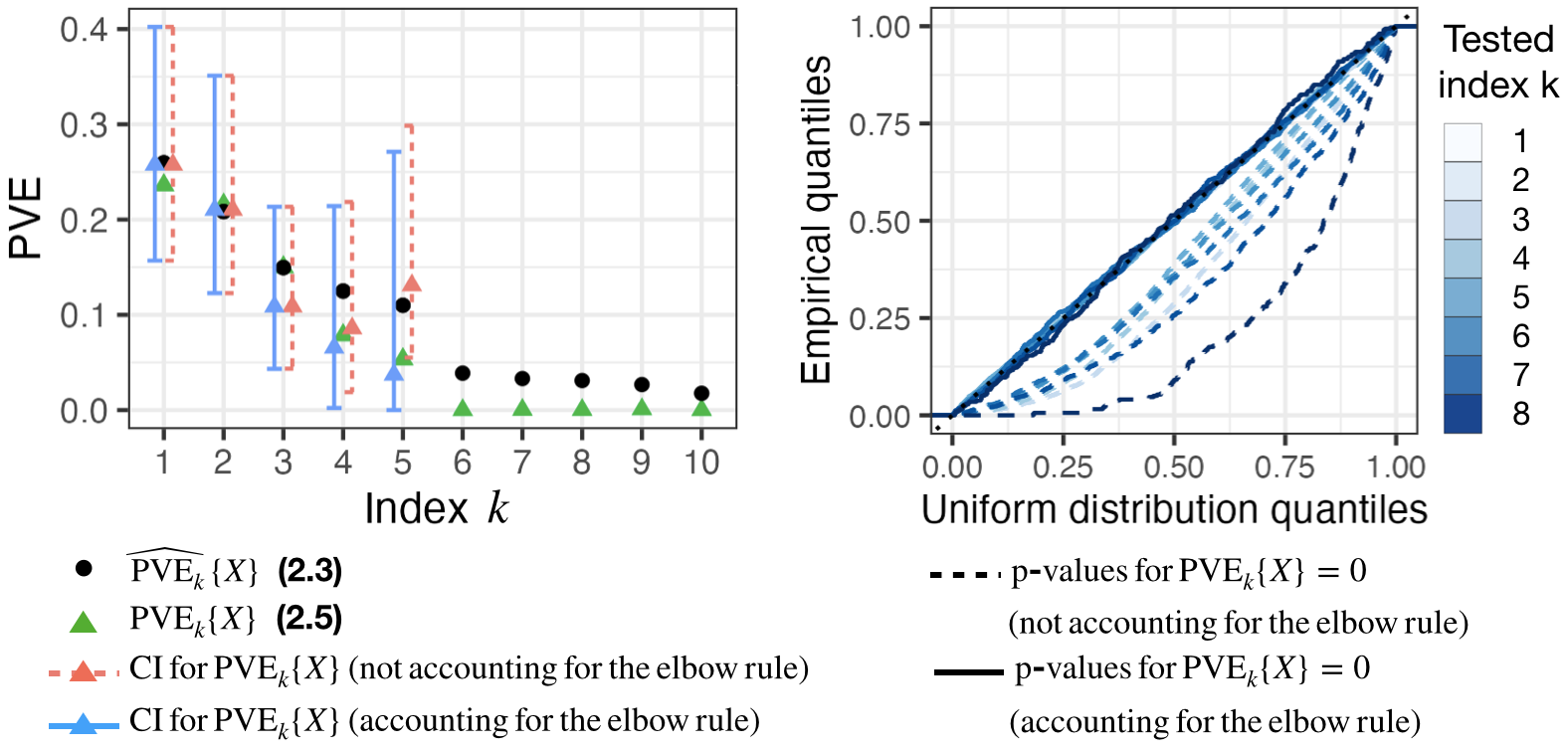}
    \caption{
        \emph{(Left):} The scree plot displays  the sample PVE from~\eqref{eq:pve} (black dots), as well as the true population $\PVE{X}_k$ defined in~\eqref{eq:pop-pve} (green triangles). The elbow rule from~\eqref{equation:zg_rule} selects five principal components. Confidence intervals for $\PVE{X}_k$~\cref{eq:pop-pve} and corresponding point estimates of $\PVE{X}_k$~\cref{eq:mle_pve}, obtained with and without accounting for selection via the elbow rule (blue and red triangles), are plotted. The confidence interval that does not account for selection using the elbow rule fails to cover $\PVE{X}_k$ when $k=5$, and the point estimate is high; accounting for selection mitigates this issue.
        \emph{(Right):} We simulated data for which $\PVE{X}_k = 0$ for all $k$ and $X$, i.e., the population PVE is zero. We then tested $H_{0,k}(X) : \PVE{X}_k = 0$ for all $k \leq r(\sval{X})$, the index  selected by the elbow rule in \eqref{equation:elbow_rule}.  The qq-plot displays the distribution of the resulting p-values across $10^4$ realizations. For example, the curve corresponding to $k=5$ represents the distribution of all p-values that result from testing $H_{0,5}(X)$, which is tested whenever $r(\sval{X}) \geq 5$. The p-values that account for selection using the elbow rule control the type 1 error, whereas those that do not account for selection do not, especially as larger indices are tested.
    }
    \label{fig:toy_example}
\end{figure}

In this paper, we fill a major gap involving inference on the PVE, a quantity that is of critical importance in applied settings. 
 We enable valid inference  for a data analysis pipeline that is  ubiquitous across virtually all areas of data-driven inquiry, but that has lacked solid methodological or theoretical footing.  
Our  primary contributions are as follows: 
\begin{enumerate}
    \item We develop a notion of a ``population quantity" corresponding to PVE. 
    \item We develop hypothesis tests of this population quantity that control the type 1 error, confidence intervals that attain the nominal coverage, and point estimates.
    \item We extend these inferential tools to the setting where the number of principal components considered was selected using a data-driven rule $r(\sval{X})$ based on the singular values $\sval{X}$, e.g., by looking for an ``elbow" in the scree plot.  
    \item We develop efficient strategies to compute the aforementioned quantities.
\end{enumerate}

Figure~\ref{fig:toy_example} illustrates our paper's key contributions on a toy example. In the left-hand panel, we consider a data matrix $X$ for which the ``elbow rule" selects the first five principal components, i.e., $r(\sval{X}) = 5$. For each principal component, we display both the sample PVE \eqref{eq:pve} and a  new population quantity, $\PVE{X}_k$, that we will define in \eqref{eq:pop-pve}. We also display confidence intervals  for $\PVE{X}_1,\ldots,\PVE{X}_5$, obtained using ideas from Section~\ref{sec:confidence_interval}. The solid intervals account for the fact that the elbow rule selected five principal components, whereas the dashed ones do not. There is a difference between the two sets of intervals: in particular, the interval for $\PVE{X}_5$ that does not account for application of the elbow rule does not cover the population quantity. Finally, on the right-hand side, we show that the test of $H_{0,k}: \PVE{X}_k=0$ developed in Section~\ref{sec:hypothesis_testing} leads to type 1 error control, provided that we account for the elbow rule. In both panels, \emph{failure to account for selection using the elbow rule is most severe when we conduct inference on $\PVE{X}_{k}$ for $k = r(\sval{X})$}; in the right panel, this is evidenced by the fact that p-values are most anti-conservative at index $k=8$, which is only ever tested when $r(\sval{X}) = 8$ since we empirically never observe $r(\sval{X}) > 8$. By contrast, when we conduct inference on $\PVE{X}_k$ for $k < r(\sval{X})$, then failure to account for the elbow rule has relatively little effect.

The rest of our paper is organized as follows. In~\cref{sec:setup}, we present a model for PCA, and derive the density of the singular values, conditional on a selection event based on an arbitrary data-driven rule.  
We further define a population quantity corresponding to the PVE. 
In~\cref{sec:inference}, we develop hypothesis tests and confidence intervals corresponding to this population notion of PVE, which we further extend to allow for a generic data-driven selection of the principal components considered for inference. 
Efficient computation of these p-values and confidence intervals, in the case where selection is performed using the elbow rule, is discussed in \cref{sec:computation}. Numerical results are presented in~\cref{sec:simulation_study}. We close with a discussion in Section~\ref{sec:discussion}. Proofs of all technical results, as well as additional numerical results, can be found in the Appendix. 
 
\section{Preliminaries for inference on the PVE}\label{sec:setup}

\subsection{A model for PCA}\label{subsec:pca_model}

A number of models for PCA have been considered in the literature. The ``probabilistic PCA" approach of  \cite{bishop_pca_1999} and \cite{minka_automatic_2000}  assumes that each  row of the $n \times p$ data matrix $X$ is independently drawn from a multivariate normal distribution with mean zero and a spiked covariance matrix. Here, instead, we consider a  model for PCA in which the low-rank structure is contained in the mean, rather than in the covariance matrix~\citep{muirhead_aspects_1982, choi_selecting_2017}. That is, we assume that 
    $X = \Theta + E$, 
where $\Theta \in \RR^{n \times p}$ is a fixed (non-random) mean matrix, and $E \in \RR^{n \times p}$ is a noise matrix with independent and identically-distributed elements, $E_{ij} \sim \Norm(0, \sigma^2)$. Equivalently, 
 \begin{equation}\label{equation:mvn_model}
    X \sim \text{MN}(\Theta, I_n, \sigma^2 I_p),
\end{equation} 
a matrix normal random variable with mean $\Theta$, independent rows, and columns with covariance matrix $\sigma^2 I_p$. Without loss of generality, we assume that $n \geq p$.
 
 Before performing PCA, it is typical to column-center the data matrix. In Appendix~\ref{appendix:subsec:centering}, we show that column-centering amounts to a slight re-parametrization of \eqref{equation:mvn_model}; thus, the discussion that follows applies with or without column-centering.

  The singular value decomposition of $X$ can be written as 
\begin{equation}
    X = \Usvd{X}
    \begin{bmatrix}
        \sval{X}_1  & \dots & 0\\
        \vdots & &\vdots\\
        0 & \dots & \sval{X}_p
    \end{bmatrix}
    \Vsvd{X}^\top.
     \label{equation:svd}
\end{equation}
Here, $\Usvd{\cdot}$ and $\Vsvd{\cdot}$ are functions mapping to $n \times p$ and $p \times p$ matrices, respectively, with orthonormal columns: that is, $\Usvd{X}^\top \Usvd{X} = \Vsvd{X}^\top \Vsvd{X} = I_{p \times p}$.   
The columns of $\Vsvd{X}=\begin{bmatrix} \Vsvd{X}_1 \ldots \Vsvd{X}_p \end{bmatrix}$ are the principal component loading vectors of $X$, while those of $\Usvd{X}=\begin{bmatrix} \Usvd{X}_1 \ldots \Usvd{X}_p \end{bmatrix}$ are the (normalized) principal component score vectors. We can similarly decompose the unknown mean matrix $\Theta$, defining $\Usvd{\Theta}=\begin{bmatrix} \Usvd{\Theta}_1 \ldots \Usvd{\Theta}_p \end{bmatrix}$, $\Vsvd{\Theta}=\begin{bmatrix} \Vsvd{\Theta}_1 \ldots \Vsvd{\Theta}_p \end{bmatrix}$, and $\sval{\Theta} = \rbr{\sval{\Theta}_1, \dots, \sval{\Theta}_p}$.

\subsection{A population parameter for the PVE}
The sample PVE~\citep{jolliffe_principal_2002, li_dimension_2010} of the $k$th principal component of the data matrix $X$ is defined as
\begin{equation}\label{eq:pve}
    \PVEhat{X}_k = \frac{\eval{X}_k}{\sum_{k'=1}^p \eval{X}_{k'}}.
    % = \frac{\eval{X}_k}{\nbr{X}_F^2}.
\end{equation}
We will now define a  population parameter corresponding to this sample quantity. It is tempting to consider
\begin{equation}
\frac{\eval{\Theta}_k}{\sum_{k'=1}^p \eval{\Theta}_{k'}}
= \frac{  \left( \Usvd{\Theta}_k^\top \Theta \Vsvd{\Theta}_k \right)^2 }{\nbr{\Theta}_F^2} 
\label{eq:pop-pve-wrong}
\end{equation}
as a candidate for this population parameter. However, interest lies in the variance explained by the $k$th \emph{sample} singular vectors,  $\Usvd{X}_k$ and $\Vsvd{X}_k$, not the variance explained by the $k$th \emph{population} singular vectors, $\Usvd{\Theta}_k$ and $\Vsvd{\Theta}_k$: this is because the former are part of the data analysis pipeline and downstream tasks, whereas the latter are not.
Thus, we replace $\Usvd{\Theta}_k$ and $\Vsvd{\Theta}_k$ in~\cref{eq:pve} with $\Usvd{X}_k$ and $\Vsvd{X}_k$.

This results in the estimand 
\begin{equation}
  \PVE{X}_k =   \frac{ \left( \Usvd{X}_k^\top \Theta \Vsvd{X}_k \right)^2 }{\nbr{\Theta}_F^2}.
    \label{eq:pop-pve}
\end{equation}
We will view \eqref{eq:pop-pve} as the population quantity corresponding to $\PVEhat{X}_k$ \eqref{eq:pve} in the rest of this paper. Note that under the \emph{global null} in which $\nbr{\Theta}_F^2 = 0$, we define $\PVE{X}_k = 0$. We interpret \cref{eq:pop-pve} as the amount of information that the $k$th singular vectors of $X$ contain about the mean $\Theta$, relative to the total amount of signal contained in $\Theta$. By Cauchy-Schwarz, $\sum_{k=1}^{p} \PVE{X}_k \leq 1$, with equality holding if the sample singular vectors are also population singular vectors. By contrast, of course $\sum_{k=1}^p \PVEhat{X}_k = 1$.

We emphasize that --- as suggested by the unusual notation involving $X$ --- the quantity $\PVE{X}_k$ in \eqref{eq:pop-pve} is not a ``conventional" parameter: it involves not only the unknown mean parameter $\Theta$ but also the data $X$ (because $\Usvd{X}_k$ and $\Vsvd{X}_k$ are derived from the data).  To conduct inference on $\PVE{X}_k$, we will need to condition on $\Usvd{X}_k$ and $\Vsvd{X}_k$. We discuss this in greater depth in~\cref{sec:density}.

\subsection{Deriving the conditional density}
 \label{sec:density}

We now lay the foundation for developing hypothesis tests in~\cref{sec:hypothesis_testing}, confidence intervals in~\cref{sec:confidence_interval}, and point estimates in~\cref{sec:mle}. Lemma 1 from~\citet{choi_selecting_2017} uses results from~\citet[p. 105]{muirhead_aspects_1982} to establish that under the data generating process in \eqref{equation:mvn_model}, the joint density of $\sval{X}, \Usvd{X}$, and $\Vsvd{X}$ takes the form
\begin{align}
    \label{equation:joint_density}
    f\rbr{\sval{X}, \Usvd{X}, \Vsvd{X}}
        &\propto \exp\rbr{
            -\frac{1}{2\sigma^2} \sbr{ \sum_{k=1}^p \sval{X}_k^2 -2 \sum_{k=1}^p \sval{X}_k \Usvd{X}_k^\top \Theta\Vsvd{X}_k + \tr(\Theta^\top \Theta) }
        } \nonumber\\
        &\quad \times J(\sval{X}) \,d\mu_{n,p}\rbr{\Usvd{X}} \, d\mu_{p,p}\rbr{\Vsvd{X}} \nonumber\\
        &\quad \times \prod_{k=1}^{p} \II_{[\sval{X}_{k+1}, \sval{X}_{k-1}]}(\sval{X}_k),
\end{align}
where we define
\begin{equation}
    \label{equation:J(s)}
   J(\sval{X}) := \prod_{k=1}^p \sval{X}_k^{n-p} \prod_{1 \leq k < k' \leq p}\rbr{\eval{X}_k - \eval{X}_{k'}},
\end{equation}
letting $\sval{X}_0 := \infty$, $\sval{X}_{p+1}:= 0$, and $\mu_{a,b}(\cdot)$ denote the uniform measure over matrices in $\RR^{a \times b}$ with orthogonal columns.

Recall that we wish to conduct inference on $\PVE{X}_k$ defined in \eqref{eq:pop-pve}. We make the following observations:
\begin{enumerate}
    \item  $\PVE{X}_k$ involves the data by way of $\Usvd{X}_k$ and $\Vsvd{X}_k$. Thus, we will condition on $\Usvd{X}_k$ and $\Vsvd{X}_k$ so that $\PVE{X}_k $ is deterministic.
    \item The numerator of $\PVE{X}_k$ depends on the unknown parameter $\Theta$ only by way of $\pivot{X}_k$. This quantity is coupled with the $k$th singular value $\sval{X}_k$ in the joint density in \eqref{equation:joint_density}. Since \eqref{equation:joint_density} also involves  $\Usvd{X}_{k'}$, $\Vsvd{X}_{k'}$, and $\sval{X}_{k'}$ for $k'\neq k$, we will further condition on these quantities.
    \item Typically, a data analyst is  interested in the PVE of the $k$th principal component only if  $k \leq r(\sval{X})$, where $r(\sval{X})$ is the number of principal components selected using a data-driven approach based on the singular values $\sval{X}$. Thus, we condition on the \textit{selection event} $k \leq r(\sval{X})$. In the absence of a data-driven selection approach, we define $r(\sval{X}) := p$. 
\end{enumerate}

We therefore need to characterize the distribution of $\sval{X}_k$ given the other singular values, $\Usvd{X}$, $\Vsvd{X}$, and the selection event $k \leq r(\sval{X})$.

\begin{proposition}[The conditional density of $\sval{X}_k$]\label{lemma:si_density}
    Define
    \begin{equation*}
        \sval{X}_{[-k]} := \rbr{\sval{X}_1, \dots, \sval{X}_{k-1}, \sval{X}_{k+1}, \dots, \sval{X}_p},
    \end{equation*}
    Under~\eqref{equation:mvn_model}, the 
     density of  $\sval{X}_k$, conditional on the values of $\Usvd{X}$, $\Vsvd{X}$, $\sval{X}_{[-k]}$, and the selection event $k \leq r(\sval{X})$, is
    \begin{align}\label{equation:si_density}
                &f \left(\sval{X}_k~|~\Usvd{X}, \Vsvd{X}, \sval{X}_{[-k]}, k \leq r(\sval{X}) \right)\nonumber\\  &\propto
          \exp\bigg(
                -\frac{1}{2\sigma^2} \sval{X}_k^2  +
                \frac{1}{\sigma^2} \sval{X}_k \Usvd{X}_k^\top \Theta \Vsvd{X}_k
                \bigg)
          \times \sval{X}_k^{n-p} \times \Pi_{k' \in [p]: k' \neq k} \left|\sval{X}_k^2 - \sval{X}_{k'}^2 \right|\nonumber \\
          &\quad\times \II_{[\sval{X}_{k+1}, \sval{X}_{k-1}]}(\sval{X}_k)
          \times \II[k \leq r(\sval{X})],
    \end{align}
    where $\sval{X}_0 := \infty$, $\sval{X}_{p+1}:= 0$.
    \end{proposition}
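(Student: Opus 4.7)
The plan is a bookkeeping argument starting from the joint density in~\eqref{equation:joint_density}. By the definition of a conditional density, I multiply the joint density by the selection indicator $\II[k \leq r(\sval{X})]$, treat $\sval{X}_{[-k]}$, $\Usvd{X}$, and $\Vsvd{X}$ as fixed, and absorb every factor that does not depend on $\sval{X}_k$ into the normalizing constant. So I would walk through~\eqref{equation:joint_density} factor by factor and keep only the $\sval{X}_k$-dependent pieces.

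In the exponent, the sum $\sum_{k'=1}^p \sval{X}_{k'}^2$ contributes only $-\sval{X}_k^2/(2\sigma^2)$, the sum $\sum_{k'=1}^p \sval{X}_{k'} \Usvd{X}_{k'}^\top \Theta \Vsvd{X}_{k'}$ contributes only $\sval{X}_k \Usvd{X}_k^\top \Theta \Vsvd{X}_k / \sigma^2$, and $\tr(\Theta^\top \Theta)$ is constant. In $J(\sval{X})$ of~\eqref{equation:J(s)}, the power product yields the factor $\sval{X}_k^{n-p}$, while the Vandermonde-type product over $k' < k''$ contributes exactly those factors that involve the index $k$, namely $\sval{X}_{k'}^2 - \sval{X}_k^2$ for each $k' < k$ and $\sval{X}_k^2 - \sval{X}_{k'}^2$ for each $k' > k$. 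Under the ordering enforced by the indicators (discussed next), each of these equals $|\sval{X}_k^2 - \sval{X}_{k'}^2|$, so they combine into $\prod_{k' \neq k} |\sval{X}_k^2 - \sval{X}_{k'}^2|$. The Haar measures $d\mu_{n,p}(\Usvd{X})$ and $d\mu_{p,p}(\Vsvd{X})$ do not involve $\sval{X}_k$, and since we are conditioning on $\Usvd{X}$ and $\Vsvd{X}$, they contribute nothing beyond the normalizing constant.

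It remains to handle the ordering indicators $\prod_{k'=1}^p \II_{[\sval{X}_{k'+1}, \sval{X}_{k'-1}]}(\sval{X}_{k'})$. Exactly three of these depend on $\sval{X}_k$: the term at index $k$ yields $\II_{[\sval{X}_{k+1}, \sval{X}_{k-1}]}(\sval{X}_k)$, while the terms at indices $k \pm 1$ impose $\sval{X}_k \leq \sval{X}_{k-1}$ and $\sval{X}_k \geq \sval{X}_{k+1}$, both of which are already implied by the former; thus the three indicators collapse to the single factor $\II_{[\sval{X}_{k+1}, \sval{X}_{k-1}]}(\sval{X}_k)$, and the remaining $p - 3$ indicators depend only on $\sval{X}_{[-k]}$ and go into the constant. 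Multiplying the surviving factors by the selection indicator $\II[k \leq r(\sval{X})]$ reproduces~\eqref{equation:si_density} exactly. The only nontrivial bookkeeping is verifying the sign/absolute-value accounting in the Vandermonde factor and the redundancy of the neighboring-index ordering indicators; everything else is mechanical extraction from~\eqref{equation:joint_density}.
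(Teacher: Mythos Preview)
Your proposal is correct and follows essentially the same approach as the paper's proof: start from the joint density~\eqref{equation:joint_density}, absorb into the proportionality constant every factor that depends only on the conditioned variables, and then multiply by the selection indicator (the paper phrases this last step as ``applying Bayes' rule,'' but it is the same operation). Your treatment is in fact slightly more explicit than the paper's in two places---the absolute-value accounting in the Vandermonde factor and the collapse of the three neighboring ordering indicators to a single one---both of which the paper handles silently.
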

In~\cref{lemma:si_density}, $\II_{A}(\cdot)$ denotes an indicator function that equals  one if its argument is in the set $A$, and zero otherwise.

\begin{proposition}\label{prop:theta_delta_k}
    Let $x \in \RR^{n \times p}$, and define
    \begin{equation}
        \label{equation:s_impute}
        \simpute{x}{t}^k := (\sval{x}_1, \dots, \sval{x}_{k-1}, t, \sval{x}_{k+1}, \dots, \sval{x}_p),
    \end{equation}
    \begin{equation}
        \label{equation:h}
        h\left(t; \delta, \sval{x}_{[-k]} \right) := \exp{\left(-\frac{t^2}{2\sigma^2} + \frac{t}{\sigma^2}\delta\right)} t^{n-p} \prod_{k' \neq k} \left|t^2 - \sval{x}_{k'}^2\right|,
    \end{equation}
    and
    \begin{equation}\label{equation:def-p}
        \PP_{\delta, k}(x) := \frac{
               \int_{\sval{x}_k}^{\sval{x}_{k-1}} h\left(t; \delta, \sval{x}_{[-k]}\right) \II\left[r\left(\simpute{x}{t}^k\right) \geq k \right] dt
        }{
               \int_{\sval{x}_{k+1}}^{\sval{x}_{k-1}} h\left(t; \delta, \sval{x}_{[-k]}\right) \II\left[r\left(\simpute{x}{t}^k\right) \geq k \right] dt }.
    \end{equation}
    Then
    \begin{align*}
        \label{equation:def-p}
         &\prob \big(
        \sval{X}_k \geq \sval{x}_k ~\big|~ \Usvd{X}=\Usvd{x}, \Vsvd{X}=\Vsvd{x},
        \sval{X}_{[-k]}=\sval{x}_{[-k]},
         k \leq r(\sval{X})
         \big)\\
         &= \PP_{\pivot{x}_k, k}(x).
    \end{align*}
       \end{proposition}
Finally, we arrive at this section's main result, which sets the stage for the confidence intervals and hypothesis tests in Section~\ref{sec:inference}.
\begin{theorem}[Conditional distribution of $\PP_{\pivot{x}_k,k}(X)$]\label{thm:uniform}
    Under \eqref{equation:mvn_model}, for any $\alpha \in (0,1)$, $x \in \RR^{n \times p}$, and $k$, it holds that
    \begin{equation*}
       \prob \left( \PP_{\pivot{x}_k,k}(X) \leq \alpha~|~\Usvd{X}_k=\Usvd{x}_k, \Vsvd{X}_k=\Vsvd{x}_k,
         k \leq r(\sval{X}) \right) = \alpha.
    \end{equation*}
 \end{theorem}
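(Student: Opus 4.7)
The plan is to recognize the statement as a probability integral transform applied to the conditional distribution described in Proposition~\ref{lemma:si_density}, with Proposition~\ref{prop:theta_delta_k} serving as the bridge. First, I would fix the conditioning event $\mathcal{C} := \{\Usvd{X}=\Usvd{x},\ \Vsvd{X}=\Vsvd{x},\ \sval{X}_{[-k]}=\sval{x}_{[-k]},\ k \leq r(\sval{X})\}$ and observe that, on $\mathcal{C}$, the random matrix $X$ enters the expression $\PP_{\pivot{x}_k,k}(X)$ only through the scalar $\sval{X}_k$. Indeed, the singular vectors $\Usvd{X}, \Vsvd{X}$ and the remaining singular values are all held fixed by the conditioning, and because $\Usvd{X}_k = \Usvd{x}_k$ and $\Vsvd{X}_k = \Vsvd{x}_k$ on $\mathcal{C}$, the fixed subscript $\pivot{x}_k = \Usvd{x}_k^\top \Theta \Vsvd{x}_k$ coincides with the $\Theta$-dependent inner product $\Usvd{X}_k^\top \Theta \Vsvd{X}_k$ appearing in the density \eqref{equation:si_density}.

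Next, I would apply Proposition~\ref{prop:theta_delta_k} with the dummy $x$ in its statement replaced by the random $X$: its right-hand side then becomes the conditional survival function
\[
S(s) := \prob_\Theta\left(\sval{X}_k \geq s \,\big|\, \mathcal{C}\right),
\]
evaluated at $s = \sval{X}_k$. In other words, $\PP_{\pivot{x}_k,k}(X) = S(\sval{X}_k)$ almost surely on $\mathcal{C}$.

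The last step is the probability integral transform. From Proposition~\ref{lemma:si_density}, the conditional law of $\sval{X}_k$ given $\mathcal{C}$ is absolutely continuous with respect to Lebesgue measure: the indicator factors merely restrict the support to $[\sval{x}_{k+1}, \sval{x}_{k-1}] \cap \{t : r(\simpute{x}{t}^k) \geq k\}$, which contains $\sval{x}_k$ by the hypothesis $k \leq r(\sval{x})$ (since $\simpute{x}{\sval{x}_k}^k = \sval{x}$), so the normalizing constant is strictly positive and $S$ is a continuous survival function. The PIT then gives $S(\sval{X}_k) \mid \mathcal{C} \sim \mathrm{Unif}(0,1)$, which is exactly the claimed conclusion $\prob_\Theta(\PP_{\pivot{x}_k,k}(X) \leq \alpha \mid \mathcal{C}) = \alpha$.

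I do not anticipate a serious obstacle: once Proposition~\ref{prop:theta_delta_k} is in hand, Theorem~\ref{thm:uniform} is essentially a restatement of the PIT under a well-defined continuous distribution. The only point requiring explicit verification is that the denominator in the definition of $\PP_{\pivot{x}_k,k}$ is strictly positive on $\mathcal{C}$, which is precisely what the hypothesis $k \leq r(\sval{x})$ guarantees.
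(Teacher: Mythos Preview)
Your proposal is correct and follows essentially the same route as the paper: identify $\PP_{\pivot{x}_k,k}(X)$ on the conditioning event as the conditional survival function of $\sval{X}_k$ evaluated at $\sval{X}_k$ via Proposition~\ref{prop:theta_delta_k}, and then invoke the probability integral transform using the absolute continuity furnished by Proposition~\ref{lemma:si_density}. If anything, your write-up is slightly more explicit than the paper's in checking that $\pivot{x}_k$ is the correct parameter on $\mathcal{C}$ and that the normalizing integral is strictly positive.
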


\section{Conducting inference on the PVE}
\label{sec:inference}

We now develop hypothesis tests, confidence intervals, and point estimates for $\PVE{x}_k$, for any $x \in \RR^{n \times p}$. All inference in this section will be conditional on
\begin{equation}\label{eq:conditioning_event}
    \widetilde A_k(X, x) := \cbr{\Usvd{X}_k = \Usvd{x}_k, \Vsvd{X}_k=\Vsvd{x}_k, k \leq r\left(\sval{X}\right)}.
\end{equation}
That is, we conduct inference conditional on the $k$th left and right estimated singular vectors, and the fact  that at least $k$  principal components were selected from the data. 

\subsection{Hypothesis tests}
\label{sec:hypothesis_testing}

We consider testing the selective hypothesis $\PVE{X}_k = 0$ in~\eqref{eq:pop-pve}, conditional on $\widetilde A_k(X, x)$ in \eqref{eq:conditioning_event}. Since $\PVE{X}_k = 0$  if and only if $\pivot{X}_k = 0$, the following corollary of \cref{thm:uniform}  guarantees  selective type 1 error control, in the sense of~\citet{fithian_optimal_2017}, for a test based on  $\PP_{0,k}(X)$ defined in \eqref{equation:def-p}.

\begin{corollary}[Uniformity of p-values under the null]\label{cor:pval_uniform}
    For any $\alpha \in (0, 1)$, $x \in \RR^{n \times p}$, the test that rejects $H_{0, k}(x): \PVE{x}_k=0$ if $\PP_{0, k}(X) \leq \alpha$  controls the selective type 1 error at level $\alpha$. That is, under the null hypothesis $H_{0, k}(x): \PVE{x}_k=0$,
    \begin{equation}\label{eq:selT1E}
        \prob\rbr{\PP_{0,k}(X) \leq \alpha ~\big|~\widetilde A_k(X, x)}  = \alpha.
    \end{equation}
\end{corollary}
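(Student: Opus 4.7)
The plan is to derive this corollary as an almost immediate consequence of Theorem~\ref{thm:uniform} by (a)~exploiting what the null hypothesis forces on $\pivot{x}_k := \Usvd{x}_k^\top \Theta \Vsvd{x}_k$, and (b)~passing from the finer conditioning set of the theorem to the coarser one appearing in the corollary via the tower property of conditional expectation.

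First, I would unpack the null hypothesis. The population parameter $\PVE{x}_k$ in~\eqref{eq:pop-pve} has the non-negative quantity $\sum_{k' \neq k} \sval{x}_{k'}^2$ in its denominator, so $\PVE{x}_k = 0$ forces the numerator to vanish, namely $\bigl(\Usvd{x}_k^\top \Theta \Vsvd{x}_k\bigr)^2 = 0$. Hence under $H_{0,k}(x)$ we have $\pivot{x}_k = 0$, so $\PP_{\pivot{x}_k, k}(x) = \PP_{0, k}(x)$ exactly as functions of $x$. Substituting $\pivot{x}_k = 0$ into the conclusion of Theorem~\ref{thm:uniform} gives
\begin{equation*}
    \prob_\Theta\!\Bigl(\PP_{0,k}(X) \leq \alpha \,\Big|\, \Usvd{X}=\Usvd{x},\, \Vsvd{X}=\Vsvd{x},\, \sval{X}_{[-k]}=\sval{x}_{[-k]},\, k \leq r(\sval{X})\Bigr) = \alpha
\end{equation*}
for every $x$ with $k \leq r(\sval{x})$.

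Second, I would pass to the coarser conditioning set. Writing $\mathcal{G}_{\text{fine}} = \sigma\bigl(\Usvd{X},\, \Vsvd{X},\, \sval{X}_{[-k]}\bigr)$ and $\mathcal{G}_{\text{coarse}} = \sigma\bigl(\Usvd{X}_k,\, \Vsvd{X}_k\bigr)$, we have $\mathcal{G}_{\text{coarse}} \subset \mathcal{G}_{\text{fine}}$, and the event $\{k \leq r(\sval{X})\}$ may be incorporated through the usual conditioning-on-an-event construction. Applying the tower property, and using that the fine conditional probability is the \emph{constant} $\alpha$ regardless of the realized value of $(\Usvd{X}_{[-k]}, \Vsvd{X}_{[-k]}, \sval{X}_{[-k]})$,
\begin{align*}
    \prob_\Theta\!\bigl(\PP_{0,k}(X) \leq \alpha \,\big|\, \mathcal{G}_{\text{coarse}},\, k \leq r(\sval{X})\bigr)
    &= E_\Theta\!\Bigl[\, \prob_\Theta\!\bigl(\PP_{0,k}(X) \leq \alpha \,\big|\, \mathcal{G}_{\text{fine}},\, k \leq r(\sval{X})\bigr) \,\Big|\, \mathcal{G}_{\text{coarse}},\, k \leq r(\sval{X}) \Bigr] \\
    &= E_\Theta\!\bigl[\, \alpha \,\big|\, \mathcal{G}_{\text{coarse}},\, k \leq r(\sval{X}) \bigr] = \alpha,
\end{align*}
which is~\eqref{eq:selT1E}.

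There is essentially no hard step: the work has already been done in Theorem~\ref{thm:uniform}. The only thing to be careful about is the trivial but necessary observation that $\PVE{x}_k = 0$ pins down $\pivot{x}_k$ to be $0$ (and not merely bounded), which is what makes the null hypothesis a simple hypothesis in the relevant parameter and therefore lets the pivotal statement of Theorem~\ref{thm:uniform} be invoked without further nuisance-parameter handling.
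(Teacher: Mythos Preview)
Your proposal is correct and follows essentially the same approach as the paper: you identify that $H_{0,k}(x)$ forces $\pivot{x}_k=0$, apply Theorem~\ref{thm:uniform} at that value, and then use the tower property to pass from the fine conditioning set $\{\Usvd{X},\Vsvd{X},\sval{X}_{[-k]},k\leq r(\sval{X})\}$ to the coarser one $\{\Usvd{X}_k,\Vsvd{X}_k,k\leq r(\sval{X})\}$. Your write-up is in fact slightly cleaner than the paper's, which states the intermediate inequality as ``$\leq\alpha$'' rather than the equality that Theorem~\ref{thm:uniform} actually delivers.
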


\begin{remark}[Relationship to \citet{choi_selecting_2017}]
    Comparing the p-value in~\cref{cor:pval_uniform} to that from Theorem 2.1 of~\citet{choi_selecting_2017}, we note that the primary difference is that the latter was computed without conditioning on $k \leq r\left(\sval{X}\right)$, i.e., it does not account for the selection of the number of principal components from the data. 
\end{remark}

\subsection{Confidence intervals}\label{sec:confidence_interval}

We wish to construct a confidence interval for $\PVE{x}_k$ defined in \eqref{eq:pop-pve}, conditional on~\cref{eq:conditioning_event}. That is, our goal is an interval $\text{CI}_{\alpha, k}(X)$ that satisfies
\begin{equation*}
    \Pr\left( \PVE{x}_k \in \text{CI}_{\alpha,k}(X) \mid \tilde{A}_k(X, x) \right) =   \Pr\left( \frac{ \left( \Usvd{x}_k^\top \Theta \Vsvd{x}_k \right)^2 }{\nbr{\Theta}_F^2} \in \text{CI}_{\alpha,k}(X)~\bigg|~ \tilde{A}_k(X, x) \right) \geq 1-\alpha.
\end{equation*}

First, we note that for positive $\alpha_1$, we can obtain a confidence region for the numerator of~\cref{eq:pop-pve} of the form 
\begin{equation*}
    \Pr(\Usvd{x}_k^\top \Theta \Vsvd{x}_k \in \text{C}_{\alpha_1, k}(X) \mid \tilde{A}_k(X, x) ) = 1-\alpha_1
\end{equation*}
by inverting the p-values in~\cref{cor:pval_uniform}, i.e., 
\begin{equation}\label{eq:CI-numer}
  \text{C}_{\alpha_1, k}(X) := \cbr{\delta\in \RR : \PP_{\delta, k}\rbr{X} \in \sbr{\alpha_1/2,\, 1-\alpha_1/2}}.
\end{equation}
More specifically, the conditional density in~\cref{equation:si_density} is a one-parameter exponential family in the parameter $\pivot{X}_k$. By~\cref{prop:theta_delta_k}, $1 - \PP_{\delta, k}\rbr{\cdot}$ is the CDF of this family and hence has a monotone likelihood ratio in $\delta$. Therefore, the confidence region is the interval $\text{C}_{\alpha_1, k}(X) = \sbr{L_{\alpha_1}^{\mathrm{num}}(X), U_{\alpha_1}^{\mathrm{num}}(X)}$, where $L_{\alpha_1}^{\mathrm{num}}(X)$ and $U_{\alpha_1}^{\mathrm{num}}(X)$ satisfy
\begin{equation}
    \PP_{L_{\alpha_1}^{\mathrm{num}}(X), k}(X) = \alpha_1/2,
    \quad\quad
    \PP_{U_{\alpha_1}^{\mathrm{num}}(X), k}(X) = 1 - \alpha_1/2.
\end{equation}

Under a transformation of $\text{C}_{\alpha_1, k}(X)$ (details left to \cref{proof:cor_si_ci_coverage}), we can obtain a $1-\alpha_1$ confidence interval for the square $\rbr{\pivot{x}_k}^2$, denoted $\sbr{\widetilde L_{\alpha_1/2}^{\mathrm{num}}(X), \widetilde U_{\alpha_1/2}^{\mathrm{num}}(X)}$.

Now, suppose that we also had access to an independent dataset $X'$ from our model~\cref{equation:mvn_model}. We could then construct a confidence interval for the denominator of~\cref{eq:pop-pve} of the form 
\begin{equation}
    \Pr\rbr{\|\Theta\|_F^2 \in [L_{\alpha_2}^{\mathrm{denom}}(X'), U_{\alpha_2}^{\mathrm{denom}}(X')]}=1-\alpha_2,
    \label{eq:CI-denom}
\end{equation}
for positive $\alpha_2$. Noting that \eqref{eq:CI-denom} also holds conditional on $\tilde{A}_k(X, x)$,  it would follow that
\begin{align}
    & \Pr\left(\frac{\rbr{\pivot{x}_k}^2}{\|\Theta\|_F^2} \in  \sbr{ \frac{\widetilde L_{\alpha_1}^{\mathrm{num}}(X)}{U_{\alpha_2}^{\mathrm{denom}}(X')}, \frac{\widetilde U_{\alpha_1}^{\mathrm{num}}(X)}{L_{\alpha_2}^{\mathrm{denom}}(X')} } \bigg\vert \tilde{A}_k(X, x) \right)  \nonumber \\
    & \geq  \Pr\Bigg( \left\{ \Usvd{x}_k^\top \Theta \Vsvd{x}_k \in \text{C}_{\alpha_1, k}(X) \right\} \cap \left\{   \|\Theta\|_F^2 \in [L_{\alpha_2}^{\mathrm{denom}}(X'), U_{\alpha_2}^{\mathrm{denom}}(X')]    \right\} \big\vert \tilde{A}_k(X, x) \Bigg) \nonumber \\
    & \geq (1- \alpha_1)(1 - \alpha_2) > 1-\alpha_1 -\alpha_2.
\label{eq:CI-ratio}
\end{align}
We implement this strategy as follows. For a positive constant $c$, we define $X^{(1)} := X + c E'$  for $E' \sim \text{MN}(0, I_n, \sigma^2 I_p)$, a noisy version of  $X$. Similarly, let $x^{(1)} \in \RR^{n \times p}$. Then, rather than a confidence interval for $\PVE{x}_k$ conditional on $\widetilde A_k\rbr{X, x}$, we will instead construct a confidence interval for
\begin{equation}\label{eq:actual-PVE}
    \PVE{x^{(1)}}_k := \frac{\rbr{\pivot{x^{(1)}}_k}^2}{\nbr{\Theta}_F^2}, 
\end{equation}
conditional on $\widetilde A_k\rbr{X^{(1)}, x^{(1)}}$.

A confidence interval for the numerator of \eqref{eq:actual-PVE}, conditional on $\widetilde A_k\rbr{X^{(1)}, x^{(1)}}$, takes a form directly analogous to \eqref{eq:CI-numer}. To obtain a confidence interval for the denominator, let $\sigma^2_c := \sigma^2\rbr{1 + c^{-2}}$ and define $X^{(2)} := X - c^{-1}E' \sim \text{MN}(\Theta, I_n, \sigma^2_c I_p)$. Note that $X^{(1)}$ and $X^{(2)}$ are independent by construction of this data thinning procedure~\citep{rasines_splitting_2023, leiner_data_2023, neufeld_data_2024, dharamshi_generalized_2024}, and $\sigma^{-2}_c \nbr{X^{(2)}}_F^2 \sim \chi^2_{np}\rbr{\sigma^{-2}_c\nbr{\Theta}_F^2}$, i.e., this is a non-central chi-squared random variable.  Recalling that $\sigma$ is known, we can therefore construct an exact $1-\alpha_2$ confidence interval for $\nbr{\Theta}_F^2$ numerically using the quantiles of the non-central $\chi_{np}^2$ distribution. We let $\left[ L_{\alpha_2}^{\mathrm{denom}}\rbr{X^{(2)}}, U_{\alpha_2}^{\mathrm{denom}}\rbr{X^{(2)}} \right]$ denote this interval. 

Thus, for any $\alpha_1, \alpha_2 \in (0,1)$, we define the confidence interval
\begin{align}\label{equation:si_confidence_interval}
    \text{CI}_{\alpha_1, \alpha_2, k}\rbr{X^{(1)}, X^{(2)}}
    := \sbr{ \frac{ \widetilde L_{\alpha_1}^{\mathrm{num}} \rbr{X^{(1)}} }{U_{\alpha_2}^{\mathrm{denom}} \rbr{X^{(2)}}},
    \frac{ \widetilde U_{\alpha_1}^{\mathrm{num}} \rbr{X^{(1)}} }{L_{\alpha_2}^{\mathrm{denom}}\rbr{X^{(2)}}}
    }.
\end{align}

\begin{theorem}[Selective coverage of the confidence interval]\label{thm:si_ci_coverage}
    Let $X$ be distributed according to~\cref{equation:mvn_model}, and for a positive constant $c$, define $X^{(1)} := X + c E'$ and $X^{(2)} := X - c^{-1}E'$ for $E' \sim \text{MN}(0, I_n, \sigma^2 I_p))$. For any $\alpha_1, \alpha_2 \in (0, 1)$ and $x^{(1)} \in \RR^{n \times p}$,
    \begin{equation}
        \prob\rbr{ \PVE{x^{(1)}}_k \in \mathrm{CI}_{\alpha_1, \alpha_2,k}\rbr{X^{(1)}, X^{(2)}} ~\bigg|~\widetilde A_k\rbr{X^{(1)}, x^{(1)}} }  \geq 1-\alpha_1 - \alpha_2.
    \end{equation}
\end{theorem}

\begin{remark}
    Note that $X$ and $X^{(1)}$ have the same mean, but the latter has larger variance. Therefore, the singular vectors  of both $X$ and $X^{(1)}$ provide estimates of the principal components of $\Theta$; however, those arising from $X^{(1)}$ are noisier. Similarly, confidence intervals calculated using $X^{(2)}$ will be wider than those computed using $X$. In short, to provide valid inference after PCA, we accept a reduction in both the accuracy of the estimated singular vectors and the inferential power.
\end{remark}

\begin{remark}
    In practice, we set $c=1$, which equally splits the Fisher information about $\Theta$ into $X^{(1)}$ and $X^{(2)}$. Furthermore, we set $\alpha_1 + \alpha_2 = \alpha$ to control the type I error at a desired level $\alpha$. We balance $\alpha_1$ and $\alpha_2$ to yield empirically narrow average confidence intervals.
\end{remark}

\subsection{Point estimates}
\label{sec:mle}

 To obtain a point estimate for $\PVE{x}_k$ in~\cref{eq:pop-pve} conditional on $\widetilde A_k(X, x)$, we first consider a point estimate for the numerator. To do this, we maximize the likelihood based on the conditional density of $\sval{X}_{k}$ in~\eqref{equation:si_density}. As this is a function of $\Theta$ only through $\pivot{x}_k$, the  point estimate for $\pivot{x}_k$ conditional on $\widetilde A_k(X, x)$ is 
\begin{equation}\label{eq:mle_pivot}
    \hat \delta_{k, \mathrm{MLE}}\rbr{X}
    := \argmax_{\delta \in \RR} \frac{
                h\left(\sval{X}_k; \delta, \sval{X}_{[-k]}\right)
        }{
               \int_{\sval{X}_{k+1}}^{\sval{X}_{k-1}} h\left(t; \delta, \sval{X}_{[-k]}\right) \II\left[r\left(\simpute{X}{t}^k\right) \geq k \right] dt },
\end{equation}
where $h\left(t; \delta, \sval{X}_{[-k]} \right)$ is defined in~\eqref{equation:h}.

We further note that $\nbr{X}_F^2 - np\sigma^2$ is an unbiased estimator of the denominator of~\cref{eq:pop-pve}. However, $\nbr{X}_F^2 - np\sigma^2$ is not unbiased for the denominator of~\cref{eq:pop-pve} conditional on $\tilde{A}_k(X,x)$ in \eqref{eq:conditioning_event}. 
Thus, as in~\cref{sec:confidence_interval},  for a positive constant $c$, we define $X^{(1)} := X + c E'$ and $X^{(2)} := X - c^{-1} E'$ for $E' \sim \text{MN}(0, I_n, \sigma^2 I_p)$; note that $X^{(1)}$ and $X^{(2)}$ are independent \citep{rasines_splitting_2023}.
Our point estimate for $\PVE{x^{(1)}}_k$ conditional on $\widetilde A_k(X^{(1)}, x^{(1)})$ is therefore 
\begin{equation}\label{eq:mle_pve}
   \PVE{X^{(1)}, X^{(2)}}_{k, MLE} := \frac{ \hat\delta_{k, MLE}^2\rbr{X^{(1)}} }{\nbr{X^{(2)}}_F^2 - np\sigma^2_c},
\end{equation}
where $\sigma^2_c := \sigma^2\rbr{1 + c^{-2}}$.

\section{Computation of $\PP_{\delta,k}(X)$}

\label{sec:computation}

We saw in Section~\ref{sec:inference} that the quantity $\PP_{\delta,k}(X)$, defined in \eqref{equation:def-p}, is key to inference on $\PVE{X}_k$. Computing $\PP_{\delta, k}(X)$ in \eqref{equation:def-p} requires evaluating the indicator variable $\II\left[r\left(\simpute{X}{t}^k\right) \geq k \right]$.  
If $r(\sval{X})=p$ --- that is, if we intend to conduct inference on the PVE of all $p$ principal components --- then this indicator always equals one. However, if the function $r(\cdot):\mathbb{R}_{+}^p \rightarrow \{1,\ldots,p\}$ corresponds to a data-driven selection event, then we must characterize the set 
\begin{equation}
    \label{equation:elbow_solution_set}
    \cbr{t: k \leq r\left(\simpute{X}{t}^k\right)}
\end{equation}

in order to compute \eqref{equation:def-p}. 
 This set will of course depend on the form of  $r(\cdot)$. 
 
We now consider  two possible choices of $r(\cdot)$. Both are instantiations of the elbow rule, in the sense that they seek an ``elbow" in the scree plot~\citep{jolliffe_principal_2002, zhu_automatic_2006, nguyen_ten_2019, greenacre_principal_2022}.

\subsection{Computation with the discrete second derivative rule} \label{sec:derivative}

The discrete second derivative is defined as
\begin{equation}\label{equation:discrete_deriv}
    \kappa_{k}(\sval{X}) := \eval{X}_{k-1} -2\eval{X}_{k} + \eval{X}_{k+1},
\end{equation}
and we define the \emph{derivative rule} --- an elbow rule based on the discrete second derivative ---  as 
\begin{equation}\label{equation:elbow_rule}
      r(\sval{X})
      := \argmax_{k \in \{1,\dots,p-2\}} \kappa_{k+1}(\sval{X}),
\end{equation}
which will be unique almost surely.
Thus, the $\left(r(\sval{X})+1\right)$th singular value has the largest discrete second derivative, and so the elbow occurs at the $\left(r(\sval{X})+1\right)$th index. This implies that the last $p-r(\sval{X})$ principal components capture predominantly noise, so only the first $r(\sval{X})$ principal components are retained for further study.

We now explicitly characterize the set \eqref{equation:elbow_solution_set} in the case of the derivative rule \eqref{equation:discrete_deriv}--\eqref{equation:elbow_rule}; it takes the form of one or two compact intervals.

\begin{proposition}\label{prop:elbow_solution_set}
    Let $X \in \RR^{n \times p}$, and $k \leq r(\sval{X})$, for $r(\cdot)$ defined in~\eqref{equation:elbow_rule}.  We define the values
    \begin{align*}
        c_{1,k}(X) &:= 
        \begin{cases} 
          \underset{i \in \cbr{2, \dots, k-2}}{\max} \kappa_i\rbr{\sval{X}} & \mathrm{if}\,k \geq 4 \\
          -\infty & \mathrm{otherwise}
          \end{cases},\\
        % \quad\mathrm{and}\quad
        c_{2,k}(X) &:= \begin{cases} 
          \underset{i \in \cbr{k+2, \dots, p-1}}{\max} \kappa_i\rbr{\sval{X}} & \mathrm{if}\,k \leq p-3 \\
          -\infty & \mathrm{otherwise}
          \end{cases},
    \end{align*}
    where $\kappa_i(\sval{X})$ is defined in~\eqref{equation:discrete_deriv}.
    Define the intervals
    \begin{align*}
        A_k(X) &:= [\max\{ {\textstyle \frac13} \rbr{\sval{X}_{k-1}^2 + 3\sval{X}_{k+1}^2 - \sval{X}_{k+2}^2}, \\
        &\quad\quad\,\, c_{1,k}(X) + 2\sval{X}_{k+1}^2 - \sval{X}_{k+2}^2\}, \infty) \quad\mathrm{for}\,k\geq 1,\\
        B_2(X) &:= \left({\textstyle \frac12}\rbr{\sval{X}_{1}^2 + \sval{X}_{3}^2 - c_{2,2}(X)}, \infty\right),\\
             B_k(X) &:= \sbr{{\textstyle \frac12}\rbr{\sval{X}_{k-1}^2 + \sval{X}_{k+1}^2 - c_{2,k}(X)}, 2\sval{X}_{k-1}^2 - \sval{X}_{k-2}^2 + c_{2,k}(X)} \quad\mathrm{for}\,k\geq 3.
    \end{align*}
    It holds that
    \begin{equation*}
        \cbr{t: k \leq r\left(\simpute{X}{t}^k\right)} =
        \begin{cases} 
          [0, \infty) & \mathrm{if}\, k=1\\
          A_k(X) & \mathrm{if}\, c_{1,k}(X) > c_{2,k}(X) \,\mathrm{or}\, |c_{2,k}(X)| = \infty \\
          B_k(X) & \mathrm{if}\, k\geq 3 \,\mathrm{and} \\
          &\sval{X}_{k-2}^2 - 2\sval{X}_{k-1}^2 > - 2 \sval{X}_{k+1}^2 + \sval{X}_{k+2}^2 \\
          A_k(X) \cup B_k(X) & \mathrm{otherwise}. \\
       \end{cases}
    \end{equation*}
    Thus, $\cbr{t: k \leq r\left(\simpute{X}{t}^k\right)}$ is either an interval or the union of two intervals.
\end{proposition}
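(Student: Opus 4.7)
My plan is to reduce the event $k \le r(\simpute{x}{t}^k)$ to an explicit comparison of two maxima of discrete second derivatives, and then to solve for the admissible set of $t$. First, since the argmax in~\eqref{equation:elbow_rule} is almost surely unique, $r(\simpute{x}{t}^k) \geq k$ is equivalent to
\begin{equation*}
    \max_{j \in \{k+1,\dots,p-1\}} \kappa_{j}(\simpute{x}{t}^k) > \max_{j \in \{2,\dots,k\}} \kappa_{j}(\simpute{x}{t}^k),
\end{equation*}
with the convention that an empty max equals $-\infty$. For $k=1$ the right-hand side is empty, so the condition is trivially satisfied, which gives the first branch $[0,\infty)$.

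Next I would exploit the structural fact that $\kappa_j(\sval{x})$ depends only on $\sval{x}_{j-1}, \sval{x}_j, \sval{x}_{j+1}$. Hence the imputation $\sval{x}_k \mapsto t$ changes only $\kappa_{k-1}$, $\kappa_{k}$, and $\kappa_{k+1}$; every other $\kappa_j$ is constant in $t$, equal to its value at $\sval{x}$. The constant contributions on the ``left'' and ``right'' sides thus collapse by definition to $c_{1,k}(x)$ and $c_{2,k}(x)$. Writing $u := t^2$, the three $t$-dependent terms become
\begin{align*}
    \kappa_{k-1}(\simpute{x}{t}^k) &= u + \sval{x}_{k-2}^2 - 2\sval{x}_{k-1}^2, \\
    \kappa_{k}(\simpute{x}{t}^k) &= \sval{x}_{k-1}^2 + \sval{x}_{k+1}^2 - 2u, \\
    \kappa_{k+1}(\simpute{x}{t}^k) &= u - 2\sval{x}_{k+1}^2 + \sval{x}_{k+2}^2,
\end{align*}
all affine in $u$. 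So the task reduces to solving the scalar inequality $\max\{\kappa_{k+1}(\simpute{x}{t}^k),\, c_{2,k}(x)\} > \max\{\kappa_{k-1}(\simpute{x}{t}^k),\, \kappa_{k}(\simpute{x}{t}^k),\, c_{1,k}(x)\}$, with the $\kappa_{k-1}$ term dropped when $k = 2$ and the convention $c_{1,k}(x) = -\infty$ when $k \leq 3$.

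I would then handle this inequality disjunctively: it holds iff either (A) $\kappa_{k+1}(\simpute{x}{t}^k)$ exceeds all three left-hand quantities, or (B) $c_{2,k}(x)$ does. In case (A), the inequality $\kappa_{k+1} > \kappa_{k-1}$ is $u$-free (the $u$-terms cancel) and reduces to $\sval{x}_{k-2}^2 - 2\sval{x}_{k-1}^2 < -2\sval{x}_{k+1}^2 + \sval{x}_{k+2}^2$; when this fails, case (A) is void, which is exactly the branch triggering the $B_k(x)$-only answer. Otherwise, the remaining two inequalities impose lower bounds on $u$ whose maximum is precisely the half-line $A_k(x)$. In case (B), the inequality $c_{2,k}(x) > c_{1,k}(x)$ is $u$-free (producing the void-case trigger $c_{1,k} > c_{2,k}$ or $|c_{2,k}| = \infty$), while $c_{2,k} > \kappa_{k}$ is a lower bound on $u$ and $c_{2,k} > \kappa_{k-1}$ is an upper bound, yielding the bounded interval $B_k(x)$ (with the upper bound dropped when $k=2$, since $\kappa_{k-1}$ is absent). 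Collating which of (A) and (B) is active reproduces exactly the four branches of the proposition, and since each active case carves out an interval in $u = t^2$, the union is at most two intervals.

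The main obstacle I expect is the boundary bookkeeping: tracking which of $\kappa_{k-1}$, $c_{1,k}$, and $c_{2,k}$ degenerate at small or large $k$ (for instance $k \in \{2, 3\}$ or $k \in \{p-2, p-1\}$), and verifying that the disjunctive split matches the proposition's listed four cases verbatim---including the $|c_{2,k}(x)| = \infty$ convention and the fact that $B_k$ for $k=2$ drops its upper bound. The underlying algebra is entirely elementary, since every inequality reduces to an affine condition in $u = t^2$.
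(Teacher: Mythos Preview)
Your proposal is correct and follows essentially the same route as the paper: both reduce $k \le r(\simpute{x}{t}^k)$ to a comparison of maxima, observe that only $\kappa_{k-1},\kappa_k,\kappa_{k+1}$ depend on $t$ (collapsing the rest to $c_{1,k}$ and $c_{2,k}$), split disjunctively on whether $\kappa_{k+1}$ or $c_{2,k}$ dominates the right-hand side, and solve the resulting affine-in-$t^2$ inequalities with the same edge-case bookkeeping for small and large $k$. The only cosmetic difference is your use of strict inequalities via the almost-sure uniqueness of the argmax, whereas the paper works with non-strict ones; this does not affect the argument.
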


The key insight of~\cref{prop:elbow_solution_set} is that to compute $\PP_{\delta, k}(X)$ in \eqref{equation:def-p}, we can  numerically integrate $h(t; \delta, \sval{X}_{[-k]})$ in \eqref{equation:h} over either one or two intervals. 
We note that in computing $\PP_{\delta, k}(X)$, the set $\cbr{t: k \leq r\left(\simpute{X}{t}^k\right)}$  will be intersected with $[s_{k+1}(X), s_{k-1}(X)]$; thus, values outside this interval are not of interest. \cref{prop:elbow_solution_set} is illustrated in~\cref{fig:elbow_rule_cases}.

\begin{figure}[!htb]
    \label{fig:elbow_rule_cases}
    \centering
    \includegraphics{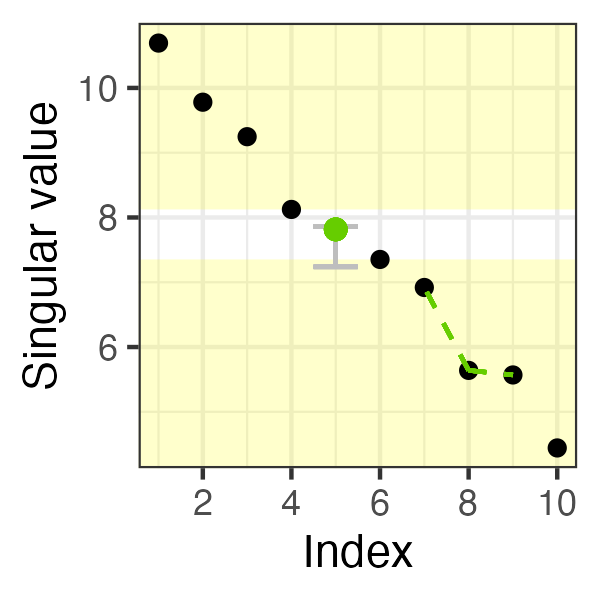}
    \includegraphics{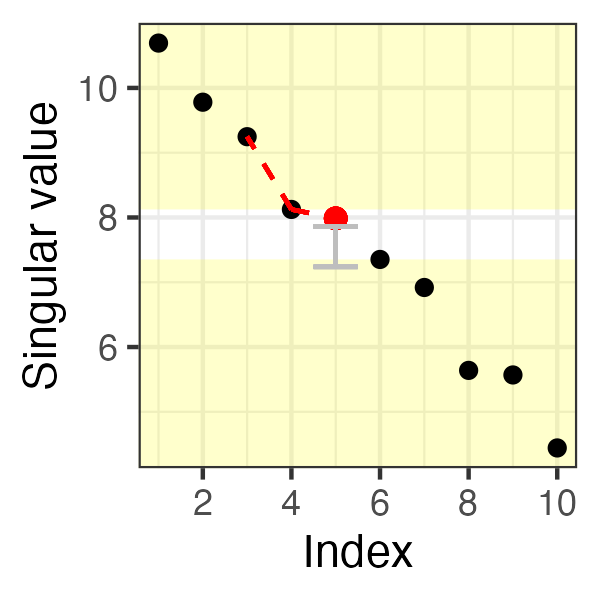}
    \includegraphics{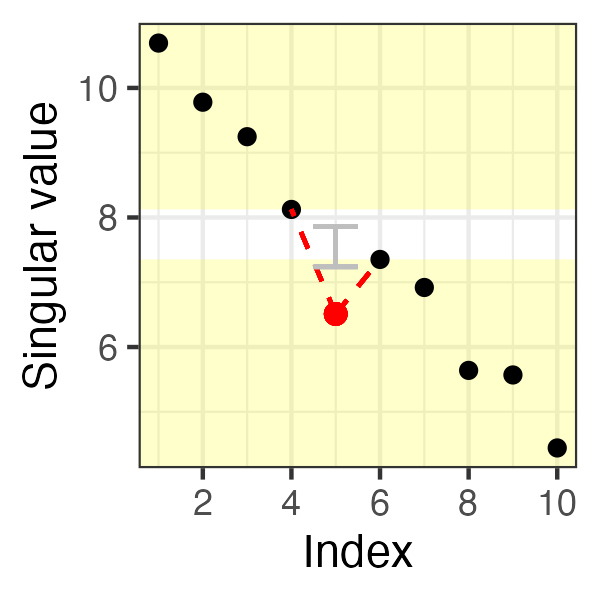}
    \caption{An illustration of~\cref{prop:elbow_solution_set} on a simulated matrix $x$ with $p=10$. 
    The derivative rule \eqref{equation:discrete_deriv}--\eqref{equation:elbow_rule} yielded $r(\sval{x}) = 7$, i.e., an elbow at $8$.
    We consider inference on the $5$th singularvalue. Each panel displays the following quantities: (i) in black, all singularvalues except the fifth, i.e., $\sval{x}_1,\ldots,\sval{x}_4, \sval{x}_6,\ldots,\sval{x}_{10}$; (ii) in white, the region in which  $\sval{x}_5$ must lie (since singularvalues are ordered, it must lie  between $\sval{x}_4$ and $\sval{x}_6$); (iii) as a gray bar, the interval $\cbr{t: 5 \leq r\left(\simpute{x}{t}^5\right)}$ from \cref{prop:elbow_solution_set}. 
    \emph{(Left):} The true value of $\sval{x}_5$ is shown as the green dot; note that it lies within the gray bar. For this value of $\sval{x}_5$, the elbow is at 8 (indicated with the dashed green lines). \emph{(Center):} A hypothetical value of $\sval{x}_5$ is shown as the red dot.  Now the elbow is at 4 (indicated with the dashed red lines). Thus, this hypothetical value of $\sval{x}_5$ falls outside of the gray bar depicting the interval $\cbr{t: 5 \leq r\left(\simpute{x}{t}^5\right)}$. \emph{(Right):} As in the center panel, but now the elbow is at 5, so $r\left(\simpute{x}{t}^5\right)=4$, i.e., this hypothetical value of $\sval{x}_5$ falls outside of the gray bar depicting the interval $\cbr{t: 5 \leq r\left(\simpute{x}{t}^5\right)}$. }
\end{figure}

\subsection{Computation with the  \citet{zhu_automatic_2006} elbow rule} \label{sec:zg}
Next, we consider an elbow rule motivated by the proposal of  
\citet{zhu_automatic_2006}; we call this the \emph{ZG rule}.
 \citet{zhu_automatic_2006}
 separately model the first $k$ and last $p-k$ squared singular values with Gaussian densities, and then select the index $k$ that maximizes the resulting density: that is, 
 \begin{equation}\label{equation:zg_rule}
    r(\sval{X})
      := 1 + \argmax_{k \in \{1,\dots,p-1\}} \ell_{k}(\sval{X})
\end{equation}  
where
\begin{align}
\label{equation:zg_rule_likelihood}
    \ell_k(\sval{X})
    &= \sum_{i=1}^k \log f_\Theta\rbr{\eval{X}_i; \hat\mu_{k,1}(\eval{X}), \hat\sigma^2_k(\eval{X})}\nonumber\\
    &+ \sum_{i=k+1}^p \log f_\Theta\rbr{\eval{X}_i; \hat\mu_{k,2}(\eval{X}), \hat\sigma^2_k(\eval{X})},
\end{align}
\normalsize
and 
$f(\cdot; \mu, \sigma^2)$ is a normal density with mean $\mu$ and variance $\sigma^2$. Here $\hat\mu_{k,1}$ and $\hat\mu_{k,2}$ are the sample means of the first $k$ and last $p-k$ squared singular values, respectively, and $\hat\sigma^2_k(\eval{X})$ is a pooled estimate of the variance.

Unlike the derivative rule, the ZG rule in~\eqref{equation:zg_rule} does not immediately yield a closed-form characterization of the  set in~\eqref{equation:elbow_solution_set}. Instead, we will numerically evaluate this set.
The likelihood in \eqref{equation:zg_rule_likelihood} is continuous, and for $k \in \cbr{2, \dots, p}$, the $k$th singular value is bounded. We can numerically approximate the set $\{ t: k \leq r\rbr{\simpute{X}{t}^k} \}$ by performing a line search over candidate values of $t$ between $\sval{X}_{k+1}$ and $\sval{X}_{k-1}$, and manually checking whether the condition $k \leq r\rbr{\simpute{X}{t}^k}$ is satisfied. We then compute $\PP_{\delta, k}(X)$ via numerical integration over these regions. In practice, \eqref{equation:elbow_solution_set} appears to be the union of one or two intervals.

\section{Empirical results}\label{sec:simulation_study}

\verb=R=-language scripts to reproduce all numerical results in this section are available at \newline\if0\blind
\href{https://github.com/rflperry/elbow_inference}{https://github.com/rflperry/elbow\_inference}.
\fi
\if1\blind
\href{https://github.com/XXXX}{https://github.com/XXXX}.
\fi

In Section~\ref{subsec:sim-datagen}, we describe our data generation procedure. In~\cref{subsec:naive_vs_estimand}, we compare our population PVE in~\cref{eq:pop-pve} to the sample PVE in~\cref{eq:pve}. In Sections~\ref{subsec:sim-hypothtest} and~\ref{subsec:sim-confint}, respectively, we investigate the selective type 1 error control and power of the hypothesis tests developed in Section~\ref{sec:hypothesis_testing}, and the selective coverage and width of the confidence intervals developed in Section~\ref{sec:confidence_interval}.
We present an application to gene expression data in Section~\ref{sec:application}. 
Throughout this section, we conduct inference only on indices selected by the elbow rule, i.e., those for which $k \leq r(\sval{X})$. 
In what follows, we use the ZG variant of the elbow rule, defined in Section~\ref{sec:zg}. 
See~\cref{sec:appendix_deriv_rule} for  experiments with the derivative rule, defined in Section~\ref{sec:derivative}.

\subsection{Data generation}
\label{subsec:sim-datagen}
We simulate data according to the model specified in~\eqref{equation:mvn_model}, with $n=50$ and $p=10$. We sample mean matrices $\Theta = U_\Theta S_\Theta V_\Theta^\top$ by setting $S_\Theta$ to be a diagonal matrix, and sampling singular vectors $U_\Theta$ and $V_\Theta$ from the singular vectors of a random matrix of i.i.d. $\Norm(0, 1)$ entries. Under the \textit{global null}, we set $S_\Theta$, and hence $\Theta$, to be $0_{n \times p}$. Under the \textit{alternative} (further details in \cref{sec:appendix_zg_rule}), we set all but the first five  diagonal elements of $S_\Theta$ to equal zero, so that $\rank(\Theta)=5$.
The value of $\sigma$ in \eqref{equation:mvn_model} varies across simulations.

\subsection{A comparison of $\PVEhat{X}_k$ and $\PVE{X}_k$}\label{subsec:naive_vs_estimand}

We first examine how the amount of noise affects the discrepancy between the naive quantity $\PVEhat{X}_k$ in~\cref{eq:pve} and the population quantity $\PVE{X}_k$ in~\cref{eq:pop-pve}. Recall that when $\sigma = 0$, and thus $X = \Theta$, these two quantities are identical for all $k$. As $\sigma$ increases, and for each index $k \leq r(\sval{X})$, \cref{fig:naive_vs_pve} displays the log of the median of $\PVEhat{X}_k / \PVE{X}_k$ across simulated datasets. We make two observations. First, for $k \leq 5$, as $\sigma$ increases, the log ratio increases, since $\PVEhat{X}_k$ increasingly overestimates the population quantity $\PVE{X}_k$. Second, when $\sigma$ is small, the ZG rule never yields $r(\sval{X}) > 5$, and so no ratios are displayed when $\sigma$ is small and $k > 5$. 

\begin{figure}[!htb]
    \centering
    \includegraphics[width=\linewidth]{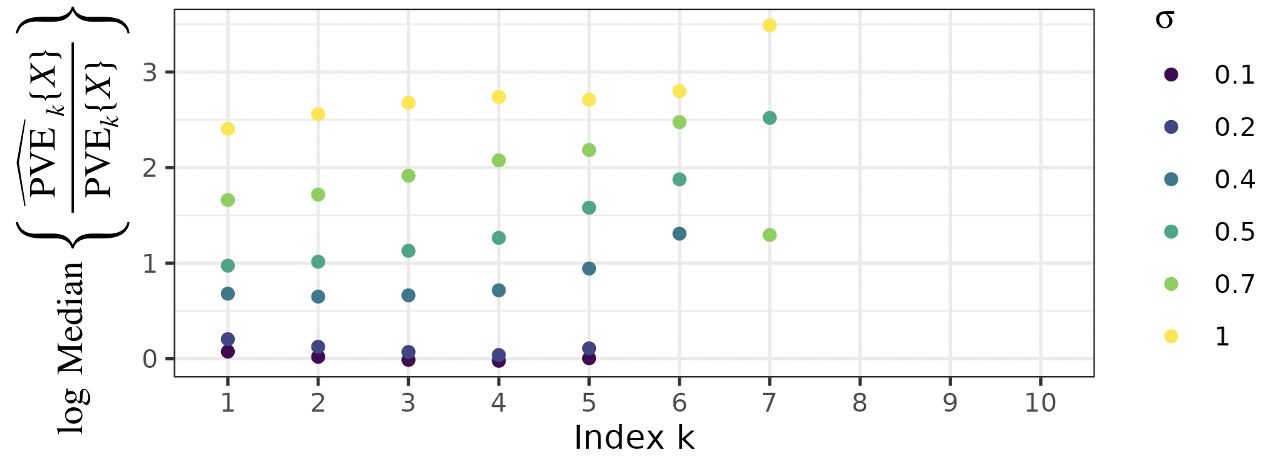}
    \caption{
    We plot the empirical median, across $10^3$ simulated datasets, of the ratio of $\PVEhat{X}_k$ to $\PVE{X}_k$ across values of $k$ and variances $\sigma$, where the ratio is only compared when $k \leq r(\sval{X})$ under the ZG rule.
    At $\sigma = 0$, these quantities are identical; thus, at low values of $\sigma$ the log ratio is close to zero. As $\sigma$ increases, the matrix $X$ becomes increasingly noisy and $\PVEhat{X}_k$ becomes larger than $\PVE{X}_k$; this is in agreement with the fact that $\sum_{k=1}^p\PVE{X}_k \leq 1$, attaining $1$ if the estimated singular vectors are the population singular vectors, i.e., if $\sigma = 0$.
    }
    \label{fig:naive_vs_pve}
\end{figure}

\subsection{Hypothesis testing simulations}
\label{subsec:sim-hypothtest}

\subsubsection{Selective type 1 error control under the global null}
We now investigate whether the test of $H_{0, k}(X): \PVE{X}_k=0$ developed in Section~\ref{sec:hypothesis_testing} controls the selective type 1 error, defined as \eqref{eq:selT1E}.  We generate data under the global null hypothesis (i.e., $\Theta = 0_{n\times p}$, so that $\PVE{X}_k=0$ for all $X$ and all $k$, by definition) and $\sigma^2=1$, and apply the ZG rule to select an index $r(\sval{X})$, and then test $H_{0, k}(X)$  at level $\alpha=0.1$ for $k \leq r(\sval{X})$.

The  qq-plot in the right-hand panel of~\cref{fig:toy_example} displays the distribution of p-values under the global null hypothesis. The plot is stratified by the value of $k$, corresponding to the null hypothesis $H_{0,k}(X): \PVE{X}_k=0$ tested. 
The p-values that account for selection using the ZG rule follow the $45^{\circ}$ line, and thus are uniformly distributed under the null. Furthermore, the p-values that do not account for selection using the ZG rule follow the $45^\circ$ line when $k=1$, i.e., when the first principal component is tested: this is because the ZG rule \emph{always} selects the first principal component, and so no adjustment for the ZG rule is required. However, as $k$ increases, the latter p-values become increasingly anticonservative, due to failure to account for the fact that we test $H_{0,k}(X)$ only for $k \leq r(\sval{X})$. In other words, the test that does not account for the ZG rule has poorest error control when the ZG rule incorrectly suggests that there is substantial signal in the matrix, i.e., in the exact setting where  error control is important. The right hand panel of~\cref{fig:toy_example} clearly indicates that failure to account for the ZG rule leads to anti-conservative p-values, and therefore the corresponding tests have inflated type 1 error. Thus, in the rest of this section, we only show results for tests, confidence intervals, and point estimates that \emph{do} account for selection via the ZG rule.

\subsubsection{Selective power under the alternative}

To evaluate the selective power of the test of $H_{0, k}(X): \PVE{X}_k=0$ under the alternative in~\cref{subsec:sim-datagen}, we simulate datasets for $\sigma \in \{0.1, 0.2, 0.4, 0.5, 0.7, 1\}$ (recall that $\rank(\Theta)=5$, with details in \cref{sec:appendix_zg_rule}). We define the following metrics:
\begin{align*}
    \text{Detection probability }
    &= \frac{\text{\# [datasets where }r(\sval{X}) = \rank(\Theta) ] }{\text{\# datasets}},\\
    \text{Selective power at $k$} &= \frac{~\text{\# [datasets where we reject }H_{0,k}(X)\text{ and } k \leq r(\sval{X})]} {\text{\# [datasets where }k \leq r(\sval{X})]}.
\end{align*}
The detection probability is the fraction of simulated datasets for which the ZG rule selects the correct rank, whereas the selective power at $k$ is the fraction of simulated datasets for which we reject $H_{0,k}(X)$, out of those for which the ZG rule selects at least rank $k$. 

In the left-hand  panel of~\cref{fig:zg-power}, we plot the detection probability of the ZG rule against the inverse variance. As the variance decreases, the probability that the ZG rule selects the correct rank increases. The right-hand panel displays the selective power of the selective test as a function of the inverse variance, stratified by the value of $k$ in the tested null hypothesis $H_{0,k}(X): \PVE{X}_k = 0$.
We see that for a given value of $k$, the power of each test increases as the variance decreases. Furthermore, as expected, the power to reject  $H_{0,k}(X): \PVE{X}_k = 0$ decreases as the index $k$ increases (since each subsequent principal component captures a decreasing proportion of the variance in the data). Note that for all $k$, unless $\Theta$ is the zero matrix, since the $k$th estimated singular vectors are almost surely not the population singular vectors, they will still contain some signal, i.e. $\pivot{X}_k \neq 0$; thus even though $\rank(\Theta) = 5$, we have power at $k=6$.

\begin{figure}[!htb]
    \centering
    \includegraphics[]{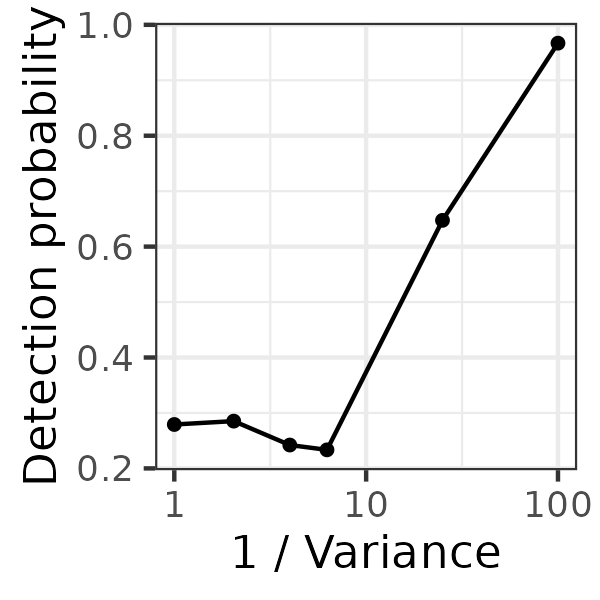}
    \includegraphics[]{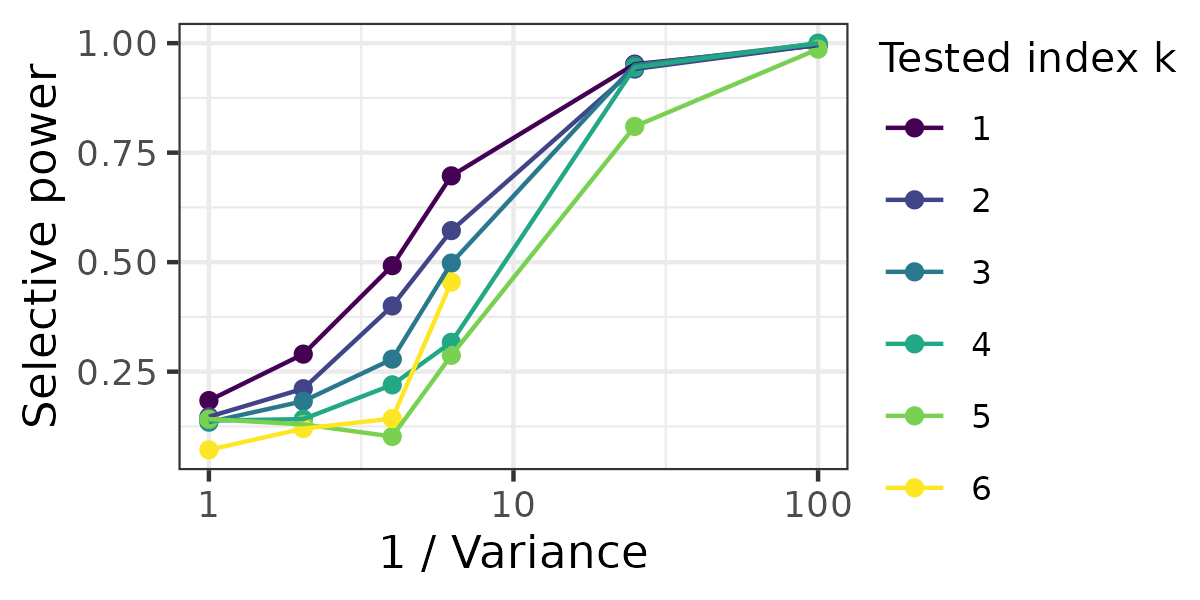}
    \caption{\emph{(Left):} The proportion of simulated datasets for which the ZG rule selects the true rank of the matrix increases with decreasing variance $\sigma^2$. \emph{(Right):} The selective power for each tested index $k  \leq r(\sval{x})$ increases with decreasing variance $\sigma^2$; furthermore, there is higher power at smaller tested indices $k$. Note that index $6$ is never tested below a certain variance since the ZG rule does not select it.}
    \label{fig:zg-power}
\end{figure}

\subsection{Confidence interval simulations}
\label{subsec:sim-confint}

We now investigate the two-sided confidence interval~\cref{equation:si_confidence_interval} developed in Section \ref{sec:confidence_interval} for $\PVE{X^{(1)}}_k$, defined in~\cref{eq:pop-pve}, where $k$ is selected by the ZG rule. For a fixed $\alpha$, we choose $\alpha_1 = (3/4)\alpha$ and $\alpha_2 = (1/4)\alpha$, which empirically yielded narrow intervals on average; this narrows the confidence interval of the numerator~\cref{eq:CI-numer} at the cost of widening the confidence interval of the denominator~\cref{eq:CI-denom}. We compute confidence intervals \emph{only} for $k \leq r(\sval{X^{(1)}})$, i.e., for indices selected by the ZG rule. 
Thus, we always conduct inference on $\PVE{X^{(1)}}_1$  (since the ZG rule always selects $r(\sval{X^{(1)}}) \geq 1$; see \eqref{equation:zg_rule}), while inference on $\PVE{X^{(1)}}_k$ for $k>1$ occurs only in a subset of  simulation replicates, and the cardinality of this subset decreases in $k$. 

To assess coverage, we simulate datasets under the alternative in~\cref{subsec:sim-datagen} with $\sigma=0.1$, and vary $\alpha \in \cbr{0.1, 0.3, 0.5, 0.7, 0.9}$.
The selective coverage at index $k$, defined as 
\begin{equation}\label{eq:selective_coverage}
    \frac{\text{\# [datasets $X$ where }k \leq r(\sval{X^{(1)}}) 
    \text{ and } \PVE{X^{(1)}}_k \in \mathrm{CI}_{\alpha_1, \alpha_2,k}(X^{(1)}, X^{(2)})  ] }{{\text{\# [datasets $X$ where } k \leq r(\sval{X^{(1)}})} \text{]}},
\end{equation}
is displayed in~\cref{fig:zg-ci_coverage}. The results reflect that the coverage guarantee from~\cref{thm:si_ci_coverage} bounds the confidence interval coverage above $1-\alpha$; our coverage is conservative.

\begin{figure}[!htb]
    \centering
    \includegraphics[width=\linewidth]{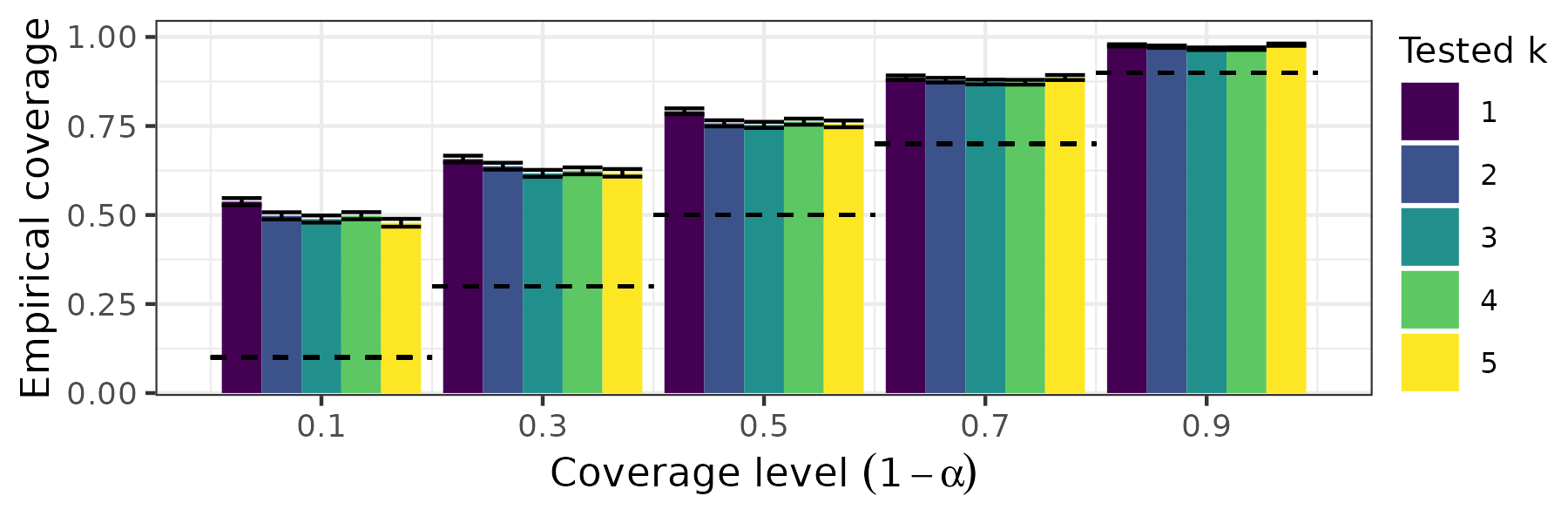}
    \caption{ We simulate $10^4$ datasets under the alternative in~\cref{subsec:sim-datagen}, with $\sigma = 0.1$. For each dataset $X$, we apply thinning and the ZG rule to obtain $r(\sval{X^{(1)}})$, and construct $1-\alpha$ confidence intervals for $\PVE{X^{(1)}}_k$ across a range of values of $\alpha$ and for $k \leq r(\sval{X})$. We display the proportion of simulated datasets for which the confidence interval contains the true value of $\PVE{X^{(1)}}_k$ (with bars depicting twice the standard error of this proportion). The confidence intervals have coverage above the nominal level (marked by horizontal dashed lines).} 
    \label{fig:zg-ci_coverage}
\end{figure}

Next, we investigate the endpoints and widths of these confidence intervals under the alternative in~\cref{subsec:sim-datagen}. We simulate $10^3$ datasets with each of two variance values $\sigma \in \{0.1, 0.2\}$, and compute confidence intervals accounting for selection by the ZG rule $k \leq r(\sval{X^{(1)}})$. In~\cref{fig:zg-ci_coverage_screeplot}, for each value of $k$ selected by the ZG rule, we plot the median upper and lower confidence interval bounds, the median value of $\PVE{X^{(1)}}_k$, and the median point estimate $\PVE{X^{(1)}, X^{(2)}}_{k, \mathrm{MLE}}$, as defined in (3.11). When the noise is low, the confidence intervals are relatively narrow and exclude $0$. When the noise level is higher, the intervals are wider, especially for larger values of $k$. In fact, when $\sigma=0.2$ and $k=5$, the confidence interval spans the entire range from $0$ to $1$, and so we  cannot determine whether the fifth estimated singular vector captures signal in $\Theta$.

Figure~\ref{fig:zg-ci_coverage_screeplot} and the left-hand panel of Figure~\ref{fig:toy_example} are closely related:  the former represents the average over $10^3$ simulated datasets, whereas the latter involves only one. 

\begin{figure}[!htb]
    \centering
    \includegraphics[width=\textwidth]{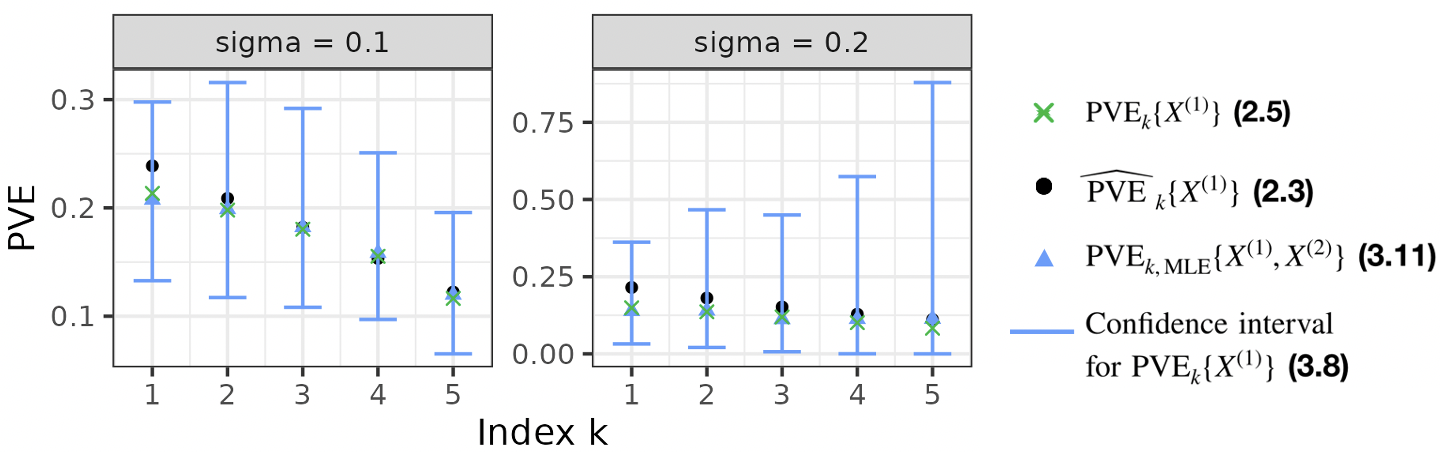}
    \caption{  We simulate $10^3$ datasets under low noise ($\sigma=0.1$, left) and high noise ($\sigma=0.2$, right), under the alternative in~\cref{subsec:sim-datagen} where $\text{rank}(\Theta)=5$. For each value of $k$, we conduct inference accounting for the ZG rule selection event $\cbr{k \leq r(\sval{X^{(1)}})}$.
    The scree plots display the median $\PVEhat{X^{(1)}}_k$ \eqref{eq:pve} (black dots), as well as the median population $\PVE{X^{(1)}}_k$ \eqref{eq:pop-pve} (green triangles). The medians of the confidence interval endpoints, as well as the median point estimate $\PVE{X^{(1)}, X^{(2)}}_{k, \mathrm{MLE}}$ \eqref{eq:mle_pve}, are also displayed.}
    \label{fig:zg-ci_coverage_screeplot}
\end{figure}

\begin{remark}
    In~\cref{fig:toy_example}, the matrix referred to as $X$ is in fact $X^{(1)}$ in the notation of~\cref{sec:confidence_interval} and beyond. That is, the confidence intervals reported are as defined in~\cref{equation:si_confidence_interval}.
\end{remark}

\subsection{An application to gene expression data}\label{sec:application}

We consider the Nutrimouse dataset~\citep{martin_novel_2007}, which consists of expression measurements of 120 genes across 40 mice. For convenience, we subset to the first $20$ genes, and center the columns (genes); by~\cref{prop:center}, our inference is thus on the centered mean matrix of the data. Since $\sigma^2$ is unknown, as in~\citet{choi_selecting_2017} we use the robust estimator proposed in~\cite{matan_optimal_2014},  defined as $\hat{\sigma}^2_{\text{med}}(X) := (\mathrm{median}\cbr{\sval{X}_1, \dots, \sval{X}_p})^2 / (n \mu_{n/p})$ where $\mu_{n/p}$ is the median of a Marchenko-Pastur distribution with parameter $n/p$. Under~\cref{equation:mvn_model}, and assuming that $\Theta$ is low-rank and further technical assumptions, this estimator is consistent for $\sigma^2$ as $p$ increases~\citep{matan_optimal_2014}. \cref{sec:appendix_zg_rule} provides empirical evidence in simulations that inference using this estimator is valid, but has lower power than when using the true value of $\sigma$.

We apply PCA to the thinned data matrix $X^{(1)}$. The ZG rule (Section~\ref{sec:zg}) selects $r(\sval{X^{(1)}})=3$; this agrees visually with the scree plot displayed in the left-hand panel of~\cref{fig:nutrimouse}. For each $k \in \cbr{1, 2, 3}$, at $\alpha_1 = 0.075$ and $\alpha_2 = 0.025$, the right-hand panel of~\cref{fig:nutrimouse} displays selective $90\%$ confidence intervals for $\PVE{X^{(1)}}_k$, defined in~\cref{eq:pop-pve}, along with the sample quantity $\PVEhat{X^{(1)}}_k$ from~\cref{eq:pve} and the estimate $\PVE{X^{(1)}, X^{(2)}}_{k, \mathrm{MLE}}$ from~\cref{eq:mle_pve}. None of the three confidence intervals include zero, thus rejecting the null hypothesis.

\begin{figure}[!htb]
    \centering
    \includegraphics[width=\linewidth]{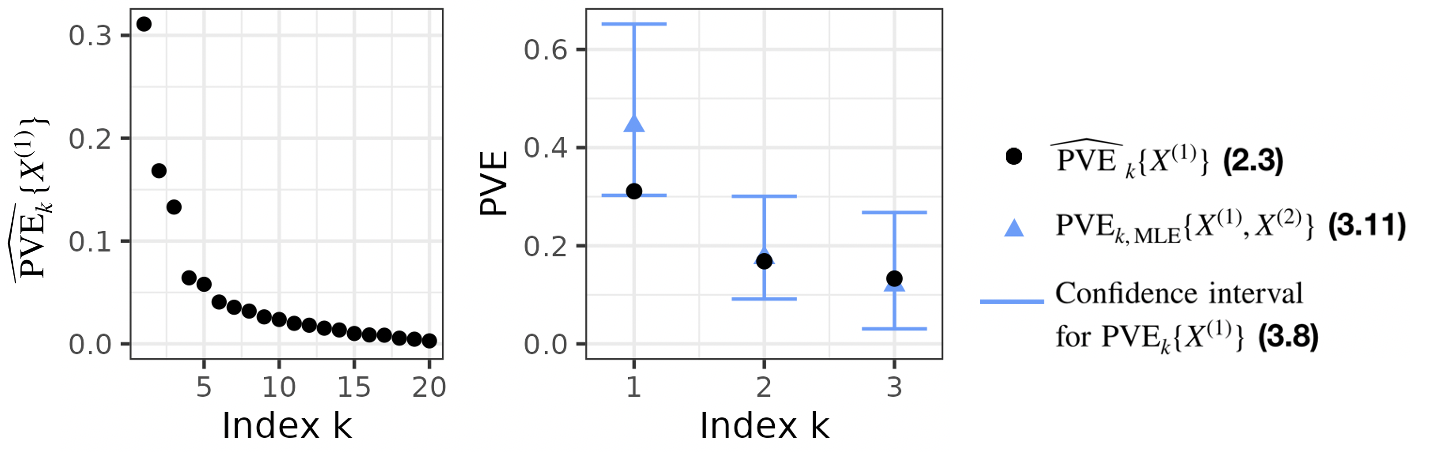} 
    \caption{ \emph{(Left):} The scree plot of the centered gene expression values for the Nutrimouse data. The ZG rule selects $r(\sval{x^{(1)}})=3$,  which corresponds to an obvious elbow. \emph{(Right):} For each $k \in \cbr{1, 2, 3}$, we plot selective $90\%$ confidence intervals for $\PVE{X^{(1)}}_k$, defined in~\cref{eq:pop-pve}, at $\alpha_1 = 0.075$ and $\alpha_2 = 0.025$, along with the sample quantity $\PVEhat{X^{(1)}}_k$ from~\cref{eq:pve} and estimates $\PVE{X^{(1)}, X^{(2)}}_{k, \mathrm{MLE}}$ from~\cref{eq:mle_pve}. All three confidence intervals exclude zero.
    }
    \label{fig:nutrimouse}
\end{figure}

\section{Discussion} \label{sec:discussion}

In this paper, we consider  inference on the PVE of the principal components of a dataset. While the sample PVE is often reported in scientific papers, we define a population parameter corresponding to this quantity, and propose an associated inferential framework. 

The population version of the PVE that we define reflects the PVE of the \emph{unknown population mean} --- not the observed data --- captured by the principal components. We further provide (i) tests of hypotheses involving the population PVE with valid type 1 error control, (ii) confidence intervals for the population PVE that  attain the nominal coverage, and (iii) point estimates for the population PVE.

In this paper, we conduct inference conditional on several events. The motivations for each of these conditioning events are distinct:
\begin{enumerate}
\item The population PVE that we propose is an unusual quantity, as it involves the $k$th singular vectors of $X$. Thus, to make this parameter deterministic we condition on these singular vectors. 
\item As in~\citet{choi_selecting_2017}, we further condition on all remaining singular vectors, as well as all but the $k$th singular value, to eliminate nuisance parameters. However, our final inferential guarantees hold even without conditioning on these quantities.
\item Data analysts typically are interested in  the PVE for only a subset of  principal components; this subset is often selected via the ZG rule, or via another function of the singular values. Thus, we further condition on the event that a given procedure, such as the ZG rule, selected this particular principal component for inference. 
\end{enumerate}

Multiple avenues of future work remain. First, we have assumed knowledge of the variance, $\sigma^2$, in \eqref{equation:mvn_model}. Of course,  this quantity  is typically unknown and must be estimated. Possible estimators include the scaled squared median of the singular values~\citep{matan_optimal_2014} or the minimum mean of squared residuals identified through cross-validation~\citep{choi_selecting_2017}. Moreover, the power of the proposed approach could be increased by introducing auxiliary randomization, as in \cite{panigrahi_exact_2024}. Lastly, it is an open question how to best choose the data thinning constant $c$, as well as $\alpha_1$ and $\alpha_2$.

\paragraph{Supplementary Materials}
The online supplement contains the proofs of results in the main paper, as well as expanded simulation results.

\paragraph{Disclosure Statement}
The authors report there are no competing interests to declare.

\bibliographystyle{plainnat}
\bibliography{references}

\newpage

\appendix

\section{Theory and proofs}\label{appendix:sec:proofs}

\subsection{ Inference on a centered matrix}\label{appendix:subsec:centering}

In this paper, we have assumed that $X \sim \text{MN}(\Theta, \sigma^2 I_n \otimes I_p)$ (see \eqref{equation:mvn_model}). However, in practice, PCA is typically performed on a column-centered matrix, $CX$, where
 $C := I_n - \frac{1}{n}1_n1_n^\top$ is a rank-$(n-1)$ projection matrix.
Thus, % Since $C$ is idempotent, % we see that 
\begin{equation*}
    CX \sim \text{MN}(C \Theta, C,  \sigma^2 I_p).
\end{equation*}
 The results in this paper cannot be directly applied to $CX$, as its distribution does not take the form  in \eqref{equation:mvn_model}. 
To be explicit, let $\mathrm{PVE}_k^{\Theta} \{X\}$ denote the PVE of the $k$th singular vectors of $X$ with respect to $\Theta$. The following result shows that we can instead apply the results in this paper to $H^\top X$, where $H$ is a matrix square root of $C$, to conduct inference on $\mathrm{PVE}_k^{C \Theta} \{C X\}$. 

\begin{proposition}[Inference after column-centering $X$]\label{prop:center}
    Let $X \sim \text{MN}(\Theta, I_n, \sigma^2 I_p)$ and define the centering matrix $C := I_n - \frac{1}{n}1_n1_n^\top$. 
    Furthermore, take $H$ to be a $n \times (n-1)$ matrix square root of $C$ that satisfies $H H^\top = C$ and $H^\top H= I_{n-1}$.  Then, (i)
    $H^\top X  \sim \text{MN}(H^\top \Theta, I_{n-1}  \otimes \sigma^2 I_p)$; 
and (ii) $\mathrm{PVE}_k^{H^\top \Theta} \{H^\top X\} = \mathrm{PVE}_k^{C \Theta} \{C X\}$.
 \end{proposition}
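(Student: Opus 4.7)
The plan for part (i) is to apply the standard linear-transformation property of the matrix-variate normal distribution: if $X \sim \MVN(\Theta, \Sigma_1 \otimes \Sigma_2)$ and $A$ is a fixed matrix of conformable size, then $AX \sim \MVN(A\Theta, A\Sigma_1 A^\top \otimes \Sigma_2)$. Taking $A = H^\top$, $\Sigma_1 = \sigma^2 I_n$, and $\Sigma_2 = I_p$, and invoking the defining property $H^\top H = I_{n-1}$, this immediately yields $H^\top X \sim \MVN(H^\top \Theta, \sigma^2 I_{n-1} \otimes I_p)$.

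For part (ii), the key structural identity is $C = HH^\top$, which lets me tie the SVDs of $CX$ and $H^\top X$ together. I would let $\tilde U \tilde S \tilde V^\top$ denote an SVD of the $(n-1)\times p$ matrix $H^\top X$ (assuming $n - 1 \geq p$, so that the thin SVD has matching dimensions), and then write $CX = H H^\top X = (H\tilde U)\,\tilde S\, \tilde V^\top$. The matrix $H\tilde U$ is $n \times p$ and has orthonormal columns because $(H\tilde U)^\top(H\tilde U) = \tilde U^\top H^\top H \tilde U = \tilde U^\top \tilde U = I_p$. Thus $(H\tilde U)\tilde S \tilde V^\top$ is a valid SVD of $CX$, and reading off singular values and vectors gives $\sval{CX}_k = \sval{H^\top X}_k$, $\Vsvd{CX}_k = \Vsvd{H^\top X}_k$, and $\Usvd{CX}_k = H\,\Usvd{H^\top X}_k$.

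Plugging these equalities into the PVE formula \eqref{eq:pop-pve}, the noise term $\sum_{k'\neq k}\sval{CX}_{k'}^2$ matches $\sum_{k'\neq k}\sval{H^\top X}_{k'}^2$ term-by-term, so it remains to check the bilinear form. Compute
\[
\Usvd{CX}_k^\top (C\Theta)\, \Vsvd{CX}_k = \Usvd{H^\top X}_k^\top H^\top C\, \Theta\, \Vsvd{H^\top X}_k,
\]
and simplify via $H^\top C = H^\top (HH^\top) = (H^\top H)H^\top = H^\top$ to obtain $\Usvd{H^\top X}_k^\top H^\top \Theta \Vsvd{H^\top X}_k$. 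Inserting these identities into the numerator and the coupled denominator term of \eqref{eq:pop-pve} gives $\mathrm{PVE}_k^{C\Theta}\{CX\} = \mathrm{PVE}_k^{H^\top \Theta}\{H^\top X\}$.

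The only real obstacles are bookkeeping: I need $n-1 \geq p$ for the thin SVD to have conforming dimensions, and I should note that the PVE is invariant under the non-uniqueness of the SVD (sign flips of paired singular vectors, and basis choices within repeated-singular-value subspaces), since it depends on the SVD only through the singular values and through the bilinear form $\Usvd{\cdot}_k^\top \Theta \Vsvd{\cdot}_k$. Both are standard and, beyond those remarks, the argument reduces to the two-line identities $H^\top H = I_{n-1}$ and $H^\top C = H^\top$.
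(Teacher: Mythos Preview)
Your proof is correct and follows essentially the same route as the paper: for (i) the linear-transformation property of the matrix-variate normal together with $H^\top H = I_{n-1}$, and for (ii) writing $CX = H(H^\top X)$, recognizing $(H\Usvd{H^\top X})\,\diag(\sval{H^\top X})\,\Vsvd{H^\top X}^\top$ as an SVD of $CX$, and then simplifying the bilinear form via $H^\top C = H^\top$. Your added remarks on $n-1\geq p$ and SVD non-uniqueness are reasonable caveats that the paper leaves implicit.
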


%Proposition~\ref{prop:center}   implies that if our interest lies in the parameter 
%$\mathrm{PVE}_k^{C\Theta} \{CX\} $ 
%associated with the column-centered version of $X$, then we can apply the inference tools developed in this paper to $H^\top X$ (which follows the correct distribution) in order to conduct inference on $\mathrm{PVE}_k^{H^\top \Theta} \{H^\top X\}$, as these two quantities are equal. 

\begin{proof}
Claim (i) follows directly from the fact that $X \sim \text{MN}(\Theta, I_n, \sigma^2 I_p)$ and that $H^\top H= I_{n-1}$. 

We will now establish claim (ii). 
Recall that $\Usvd{A} \diag(\sval{A}) \Vsvd{A}^\top$ denotes the SVD of the matrix $A$. 
Thus, 
\begin{eqnarray} CX = H (H^\top X) &=& H \left( \Usvd{H^\top X} \diag(\sval{H^\top X}) \Vsvd{H^\top X }^\top \right) \nonumber \\
&=& \left( H \Usvd{H^\top X} \right) \diag(\sval{H^\top X}) \Vsvd{H^\top X }^\top.
\end{eqnarray}
Furthermore, $\left(H \Usvd{H^\top X}) \right)^\top H \Usvd{H^\top X} = I_p$. Thus, it follows that  the SVD of $CX$ is closely related to the SVD of $H^\top X$: that is,  
\begin{equation}\Usvd{CX} =H \Usvd{H^\top X},\quad \Vsvd{CX} = \Vsvd{H^\top X}, \quad\sval{CX}=\sval{H^\top X}. \label{eq:svd-equal} \end{equation}

Thus, the singular values of $CX$ and $H^\top X$ are equal. An important corollary of this is that 
\begin{equation*}
    \nbr{ C \Theta}_F^2 = \sum_{k=1}^p \sval{C \Theta}_k^2 = \sum_{k=1}^p \sval{H^\top \Theta}_k^2 = \nbr{ H^\top \Theta}_F^2.
\end{equation*}
Furthermore,  
\begin{eqnarray*}
\Usvd{CX}_k^\top \left( C \Theta \right) \Vsvd{CX}_k 
&=&  \Usvd{H^\top X}_k^\top H^\top \left(  C \Theta \right) \Vsvd{H^\top X}_k\\
&=& \Usvd{H^\top X}_k^\top \left( H^\top \Theta \right) \Vsvd{H^\top X}_k.
\end{eqnarray*}
Thus, the quantities \begin{equation}
\mathrm{PVE}_k^{C\Theta} \{CX\} =    \frac{ \left( \Usvd{CX}_k^\top C\Theta \Vsvd{CX}_k \right)^2 }{ \nbr{C \Theta}_F^2 }
\end{equation}
and \begin{equation*}
    \mathrm{PVE}_k^{H^\top \Theta} \{H^\top X\} =    \frac{ \left( \Usvd{H^\top X}_k^\top H^\top \Theta \Vsvd{H^\top X}_k \right)^2 }{ \nbr{H^\top \Theta}_F^2 }
\end{equation*}
are equal. 
\end{proof}

\subsection{Proofs of Section 3 results}

\begin{proof}[Proof of~\cref{lemma:si_density}]
    We follow the arguments laid out in~\citet{choi_selecting_2017}, starting from the joint density of the SVD of $X$ in~\eqref{equation:joint_density}. We expand and collect conditioned-upon terms into the implicit constant of proportionality as follows:
    \begin{equation}
        \label{equation:sk_density}
        \begin{split}
            &f_\Theta\rbr{\sval{X}_{k}~|~\sval{X}_{[-k]}, \Usvd{X}, \Vsvd{X}}\\
                &\propto \exp\bigg\{
                    -\frac{1}{2\sigma^2} \Bigg[ \rbr{\sval{X}_k^2 + \sum_{k': k'\neq k} \sval{X}_{k'}}\\
                    &\qquad\quad- 2 \rbr{\sval{X}_k \pivot{X}_k + \sum_{k' \in [p]: k' \neq k} 
                    % \sval{x}_k \Usvd{x}_k^T \Theta\Vsvd{x}_k
                    \sval{X}_{k'} \pivot{X}_{k'}}
                    \Bigg]
                + \tr(\Theta^\top \Theta) \\
                &\qquad\quad\bigg\}\times J\rbr{\sval{X}}\, d\mu_{n,p}\rbr{\Usvd{X}} \, d\mu_{p,p}\rbr{\Vsvd{X}}
                \II_{[\sval{X}_{k+1}, \sval{X}_{k-1}]}(\sval{X}_k)\\
            &\propto \exp\rbr{
                    -\frac{1}{2\sigma^2} \sbr{\sval{X}_k^2 - 2 \sval{X}_k \pivot{X}_k 
                    % \sval{x}_k \Usvd{x}_k^T \Theta\Vsvd{x}_k
                    }
                }\,
                \times J\rbr{\sval{X}} \times
                \II_{[\sval{X}_{k+1}, \sval{X}_{k-1}]}(\sval{X}_k)\\
            &\propto \exp\rbr{
                    -\frac{1}{2\sigma^2} \sbr{\sval{X}_k^2 - 2 \sval{X}_k \pivot{X}_k 
                    % \sval{x}_k \Usvd{x}_k^T \Theta\Vsvd{x}_k
                    }
                }\\
                &\qquad\times
                \sbr{\sval{X}_k^{n-p} \times \Pi_{k' \in [p]: k' \neq k} \abr{\sval{X}_k^2 - \sval{X}_{k'}^2}} \times
                \II_{[\sval{X}_{k+1}, \sval{X}_{k-1}]}(\sval{X}_k),
        \end{split}
    \end{equation}
    where we have made use of the definition of $J(\cdot)$ from~\eqref{equation:J(s)}.
    Applying Bayes' rule, we see that
    \begin{equation}
        \begin{split}
            &f_\Theta\rbr{\sval{X}_{k}~|~\sval{X}_{[-k]}, \Usvd{X}, \Vsvd{X}, k \leq r\rbr{\sval{X}}}\\
            %&\propto
            %    f_\Theta\rbr{\sval{X}_{k}~|~\sval{X}_{[-k]}, \Usvd{X}, \Vsvd{X}}
            %    \times 
            %    f_\Theta\rbr{k \leq r\rbr{\sval{X}}~|~\sval{X}_{[-k]}, \Usvd{X}, \Vsvd{X}}\\
            &\propto f_\Theta\rbr{\sval{X}_{k}~|~\sval{X}_{[-k]}, \Usvd{X}, \Vsvd{X}} \times \II[k \leq r\rbr{\sval{X}}].
        \end{split}
    \end{equation}
    This is the density in~\eqref{equation:sk_density} truncated to the elbow rule selection event.
\end{proof}

\begin{proof}[Proof of~\cref{prop:theta_delta_k}]
    As defined in~\eqref{equation:h},
    \begin{align*}
        h\rbr{t; \delta, \sval{x}_{[-k]}} := \exp{\left(-\frac{t^2}{2\sigma^2} + \frac{t}{\sigma^2}\delta\right)} t^{n-p} \prod_{k' \in [p]: k' \neq k} \left|t^2 - \sval{x}_{k'}^2\right|.
    \end{align*}
    Furthermore, recall from~\cref{equation:s_impute} that
    \begin{equation*}
        \simpute{x}{t}^k := (\sval{x}_1, \dots, \sval{x}_{k-1}, t, \sval{x}_{k+1}, \dots, \sval{x}_p)
    \end{equation*}
    Note that
    \begin{equation*}
        h\rbr{\sval{x}_k; \pivot{x}_k, \sval{x}_{[-k]}}  \II_{[\sval{x}_{k+1}, \sval{x}_{k-1}]}(\sval{x}_k) \II\left[k \leq r\left(\simpute{x}{\sval{x}_k}^k\right) \right]
    \end{equation*}
    is equal to the right-hand side of~\eqref{equation:si_density} and thus is proportional to the conditional density
    \begin{equation*}
        f \left(\sval{x}_k~|~\Usvd{X}=\Usvd{x}, \Vsvd{X}=\Vsvd{x}, \sval{X}_{[-k]}=\sval{x}_{[-k]}, k \leq r(\sval{X}) \right).
    \end{equation*}
    Therefore, for any $x \in \RR^{n \times p}$,
    \begin{align}
        \label{equation:exact_density}
        &f \left(\sval{x}_k~|~\Usvd{X}=\Usvd{x}, \Vsvd{X}=\Vsvd{x}, \sval{X}_{[-k]}=\sval{x}_{[-k]}, k \leq r(\sval{X}) \right)\\ \nonumber
        &= \frac{
            h\rbr{\sval{x}_k; \pivot{x}_k, \sval{x}_{[-k]}} \II_{[\sval{x}_{k+1}, \sval{x}_{k-1}]}(\sval{x}_k) \II\left[k \leq r\left(\simpute{x}{\sval{x}_k}^k\right) \right]
        }{
            \int_{-\infty}^{\infty} h\rbr{t; \pivot{x}_k, \sval{x}_{[-k]}} \II_{[\sval{x}_{k+1}, \sval{x}_{k-1}]}(t) \II\left[k \leq r\left(\simpute{x}{t}^k\right) \right] dt
        },
    \end{align}
    where the denominator is the normalization constant required for the density to integrate to one. It follows that
    \begin{align*}
    &\prob \big(
        \sval{X}_k \geq \sval{x}_k~\big|~ \Usvd{X}=\Usvd{x}, \Vsvd{X}=\Vsvd{x},
        \sval{X}_{[-k]}=\sval{x}_{[-k]},
         k \leq r(\sval{x})
         \big)\\
        &= \int_{\sval{x}_k}^\infty f \left(t~|~\Usvd{x}, \Vsvd{x}, \sval{x}_{[-k]}, k \leq r(\sval{x}) \right)dt\\
        &= \int_{\sval{x}_k}^\infty  \rbr{\frac{
            h\rbr{t; \pivot{x}_k, \sval{x}_{[-k]}} \II_{[\sval{x}_{k+1}, \sval{x}_{k-1}]}(t) \II\left[k \leq r\left(\simpute{x}{\sval{x}_k}^k\right) \right]
        }{
            \int_{-\infty}^{\infty} h\rbr{t; \pivot{x}_k, \sval{x}_{[-k]}} \II_{[\sval{x}_{k+1}, \sval{x}_{k-1}]}(t) \II\left[k \leq r\left(\simpute{x}{\sval{x}_k}^k\right) \right] dt
        }} dt\\
        &= \frac{
            \int_{\sval{x}_k}^{\sval{x}_{k-1}} h\rbr{t; \pivot{x}_k, \sval{x}_{[-k]}} \II\left[k \leq r\left(\simpute{x}{\sval{x}_k}^k\right) \right] dt
        }{
            \int_{\sval{x}_{k+1}}^{\sval{x}_{k-1}} h\rbr{t; \pivot{x}_k, \sval{x}_{[-k]}} \II\left[k \leq r\left(\simpute{x}{\sval{x}_k}^k\right) \right] dt
        },
    \end{align*}
    where the second equality follows from~\eqref{equation:exact_density}, and the last equality follows from the ordering of the singular values.
\end{proof}

\begin{proof}[Proof of~\cref{thm:uniform}]
     For brevity, let 
     \begin{equation*}
         A_k(X, x) := \cbr{\Usvd{X}= \Usvd{x}, \Vsvd{X} = \Vsvd{x}, \sval{X}_{[-k]} = \sval{x}_{[-k]}, k \leq r(\sval{X})}.
     \end{equation*}
     We define
    \begin{align*}
        F_{x, k}(t)
        &:= \prob \rbr{
            \sval{X}_k \leq t~\big|~A_k(X, x)
        }
    \end{align*}
    % denote the CDF of $\sval{X}_k$ conditional on $A_k(X)$.
    where the probability is taken over $X$ under the model~\cref{equation:mvn_model} with mean $\Theta$. By~\cref{prop:theta_delta_k}, $F_{x, k}(\sval{X}_k) = 1 - \PP_{\pivot{x}_k, k}(X)$. 
    Thus, by the probability integral transform, for any $\alpha \in (0, 1)$,
    \begin{align*}
        \prob \left( \PP_{\pivot{x}_k,k}(X)
        \leq \alpha~|~A_k(X, x)\right)
        &= 1 - \prob \left( \PP_{\pivot{x}_k,k}(X)\geq \alpha~\bigg|~A_k(X, x)\right)
        \\
        &= 1 - \prob \left( 1 - \PP_{\pivot{x}_k,k}(X)\leq 1 - \alpha~\bigg|~A_k(X, x)\right)
        \\
        &= 1 - \prob \left( F_{x, k}(\sval{X}_k) \leq 1 - \alpha~\bigg|~A_k(X, x)\right) \\
        &= 1 - \prob \left( \sval{X}_k \leq F_{x, k}^{-1}\rbr{1 - \alpha}~\bigg|~A_k(X, x)\right) \\
        &=
        1 - F_{x, k}\rbr{F_{x, k}^{-1}(1-\alpha)}\\
        &= 1 - (1 - \alpha) \\
        &= \alpha.
    \end{align*}
    Applying the law of total expectations over $\Usvd{X}_{[-k]}, \Vsvd{X}_{[-k]}$, and $\sval{X}_{[-k]}$, recalling that we can write the probability as an expectation, we obtain the guarantee conditional on $\widetilde A_k(X, x)$ 
    in~\cref{eq:conditioning_event}. 
    We illustrate this argument in the general case. Let $\cbr{a(X) = a(x)}$ and $\cbr{b(X) = b(x)}$ be two events for some constant $x$ and random variable $X$. Suppose that a function $g$ satisfies $\EE[g(X)~|~a(X) = a(x), b(X) = b(x)] = \alpha$. Then for any distribution $H$ over $b(x)$,
    \begin{equation*}
        \int \EE[g(X)~|~a(X) = a(x), b(X) = b(x)] \, \mathrm{d}H(b(x)) = \int \alpha \, \mathrm{d}H(b(x)) = \alpha. 
    \end{equation*}
    Taking $H$ to be the conditional law of $b(X)~|~a(X) = a(x)$, by the law of total expectation
    \begin{equation*}
        \EE[g(X)~|~a(X) = a(x)] = \int \EE[g(X)~|~a(X) = a(x), b(X) = b(x)] \, \mathrm{d}H(b(x)).
    \end{equation*}
\end{proof} 

\begin{proof}[Proof of~\cref{thm:si_ci_coverage}]\label{proof:cor_si_ci_coverage}
    Define the set
    \begin{equation}
        \text{C}_{\alpha_1, k}\rbr{X^{(1)}} := \cbr{\delta \in \RR: \PP_{\delta, k}\rbr{X^{(1)}} \in \sbr{\alpha_1/2, 1-\alpha_1/2}}.
    \end{equation}
    To show that $\text{C}_{\alpha_1, k}\rbr{X^{(1)}}$ is an interval, we proceed as in Lemma A.1 of~\citet{lee_exact_2016}. Examining the conditional density in~\cref{equation:si_density}, we can see that it is a one-parameter exponential family, monotone in $\pivot{x^{(1)}}_k$ for a fixed $t$. Thus, by properties of such families, the CDF $F_{\PP_{\delta, k}\rbr{X^{(1)}}}(t)$ is monotone decreasing in $\delta$. By~\cref{prop:theta_delta_k}, $\PP_{\delta, k}\rbr{X^{(1)}} = 1 - F_{\PP_{\delta, k}\rbr{X^{(1)}}}(\sval{X^{(1)}}_k)$ conditional on $\widetilde A_k\rbr{X^{(1)}, x^{(1)}}$ and hence is monotone increasing in $\delta$ for every fixed $t$. It follows that $\PP_{\delta, k}\rbr{X^{(1)}}$ crosses $\alpha_1/2$ and $1 - \alpha_1/2$ at unique points in $\delta$, and hence we have the interval $\text{C}_{\alpha_1, k}\rbr{X^{(1)}} = \sbr{L_{\alpha_1/2}^{\mathrm{num}}\rbr{X^{(1)}}, U_{\alpha_1/2}^{\mathrm{num}}\rbr{X^{(1)}}}$ where $L_{\alpha_1/2}^{\mathrm{num}}\rbr{X^{(1)}}$ and $U_{\alpha_1/2}^{\mathrm{num}}\rbr{X^{(1)}}$ satisfy
    \begin{equation}
        \PP_{L_{\alpha_1/2}^{\mathrm{num}}\rbr{X^{(1)}}, k}\rbr{X^{(1)}} = \alpha_1/2,
        \quad\quad
        \PP_{U_{\alpha_1/2}^{\mathrm{num}}\rbr{X^{(1)}}, k}\rbr{X^{(1)}} = 1 - \alpha_1/2.
    \end{equation}
    By \cref{thm:uniform}, $\text{C}_{\alpha_1, k}\rbr{X^{(1)}}$ has the coverage guarantee
    \begin{align}
    \label{eq:pivot_ci_uniform}
        &\prob \left[\pivot{x^{(1)}}_k \in \text{C}_{\alpha_1, k}\rbr{X^{(1)}}~|~\widetilde A_k\rbr{X^{(1)}, x^{(1)}} \right]\nonumber\\
        &= \prob \left[ \PP_{\pivot{x^{(1)}}_k,\,k}\rbr{X^{(1)}} \in [\alpha_1/2, 1-\alpha_1/2]~|~\widetilde A_k\rbr{X^{(1)}, x^{(1)}} \right]\nonumber\\
        &= \prob \left[ \PP_{\pivot{x^{(1)}}_k,\,k}\rbr{X^{(1)}} \leq 1-\alpha_1/2~|~\widetilde A_k\rbr{X^{(1)}, x^{(1)}} \right]\nonumber\\
        &\quad -\prob \left[ \PP_{\pivot{x^{(1)}}_k,\,k}\rbr{X^{(1)}} \leq \alpha_1/2~|~\widetilde A_k\rbr{X^{(1)}, x^{(1)}} \right]\nonumber\\
        &= (1 - \alpha_1/2) - (\alpha_1 / 2)\nonumber\\
        &= 1 - \alpha_1.
    \end{align}
    To construct a confidence interval for $\rbr{\pivot{x^{(1)}}_k}^2$, we define $\sbr{\widetilde L_{\alpha_1}^{\mathrm{num}} \rbr{X^{(1)}}, \widetilde U_{\alpha_1}^{\mathrm{num}} \rbr{X^{(1)}}}$ where
    \begin{align*}
        \widetilde L_{\alpha_1}^{\mathrm{num}} \rbr{X^{(1)}} &:= \min\cbr{ \abr{L_{\alpha_1/2}^{\mathrm{num}}\rbr{X^{(1)}}} , \abr{U_{\alpha_1/2}^{\mathrm{num}}\rbr{X^{(1)}}} }^2 \II[\mathrm{sign}\rbr{L_{\alpha_1/2}^{\mathrm{num}}\rbr{X^{(1)}}  U_{\alpha_1/2}^{\mathrm{num}}\rbr{X^{(1)}}} = 1]
        \\
        \widetilde U_{\alpha_1}^{\mathrm{num}} \rbr{X^{(1)}} &:= \max\cbr{ \abr{L_{\alpha_1/2}^{\mathrm{num}}\rbr{X^{(1)}}} , \abr{U_{\alpha_1/2}^{\mathrm{num}}\rbr{X^{(1)}}} }^2.
    \end{align*}
    The definition of the minimum accounts for the fact that $\sbr{L_{\alpha_1}^{\mathrm{num}} \rbr{X^{(1)}}, U_{\alpha_1}^{\mathrm{num}} \rbr{X^{(1)}}}$ may cover zero; if zero is covered, the bounds have opposite signs and the lower bound of the confidence interval for the square must be zero.
    
    Now recall our confidence interval from~\cref{equation:si_confidence_interval} is
    \begin{equation*}
        % \text{CI}_{\alpha,k}\rbr{X^{(1)}, X^{(2)}} := \bigcup \cbr{\sbr{ \frac{ \delta^2 }{\lambda_{\alpha / 4}\rbr{X^{(2)}}},
        % \frac{ \delta^2 }{\lambda_{1 - \alpha / 4}\rbr{X^{(2)}}}
        % }~:~
        % \delta \in \RR, \delta \in \text{C}_{\alpha_1, k}\rbr{X^{(1)}}}.
        \text{CI}_{\alpha_1, \alpha_2, k}\rbr{X^{(1)}, X^{(2)}}
        := \sbr{ \frac{ \widetilde L_{\alpha_1}^{\mathrm{num}} \rbr{X^{(1)}} }{U_{\alpha_2}^{\mathrm{denom}} \rbr{X^{(2)}}},
        \frac{ \widetilde U_{\alpha_1}^{\mathrm{num}} \rbr{X^{(1)}} }{L_{\alpha_2}^{\mathrm{denom}}\rbr{X^{(2)}}}
        }.
    \end{equation*}
    Therefore,
    \begin{align*}
        & \Pr\left(\frac{\rbr{\pivot{x^{(1)}}_k}^2}{\|\Theta\|_F^2} \in \text{CI}_{\alpha_1, \alpha_2, k}\rbr{X^{(1)}, X^{(2)}} \mid \tilde{A}_k\rbr{X^{(1)}, x^{(1)}} \right)\\
        & \geq  \Pr\Bigg( \cbr{ 
            \rbr{\pivot{x^{(1)}}_k}^2 \in \sbr{\widetilde L_{\alpha_1}^{\mathrm{num}}\rbr{X^{(1)}}, \widetilde U_{\alpha_1}^{\mathrm{num}}\rbr{X^{(1)}}}
        }\\
        &\quad\quad\quad\quad\quad\cap \cbr{
            \nbr{\Theta}_F^2 \in [L_{\alpha_2}^{\mathrm{denom}}\rbr{X^{(2)}}, U_{\alpha_2}^{\mathrm{denom}}\rbr{X^{(2)}}]
        } \mid \tilde{A}_k\rbr{X^{(1)}, x^{(1)}} \Bigg) \\
        & \geq  \Pr\Bigg( \left\{ \Usvd{X^{(1)}}_k^\top \Theta \Vsvd{X^{(1)}}_k \in \text{C}_{\alpha_1, k}\rbr{X^{(1)}} \right\}\\
        &\quad\quad\quad\quad\quad\cap \left\{   \|\Theta\|_F^2 \in [L_{\alpha_2}^{\mathrm{denom}}\rbr{X^{(2)}}, U_{\alpha_2}^{\mathrm{denom}}\rbr{X^{(2)}}]    \right\} \mid \tilde{A}_k\rbr{X^{(1)}, x^{(1)}}  \Bigg) \\
        & =  \Pr\rbr{\pivot{x^{(1)}}_k \in \text{C}_{\alpha_1, k}\rbr{X^{(1)}}  \mid \tilde{A}_k\rbr{X^{(1)}, x^{(1)}} } \Pr\rbr{   \|\Theta\|_F^2 \in [L_{\alpha_2}^{\mathrm{denom}}\rbr{X^{(2)}}, U_{\alpha_2}^{\mathrm{denom}}\rbr{X^{(2)}}] } \\
        &\geq (1 - \alpha_1) (1 - \alpha_2)\\
        & \geq 1-\alpha_1 -\alpha_2.
    \label{eq:CI-ratio}
    \end{align*}
\end{proof}

\subsection{Proofs of Section 4 results}

\begin{proof}[Proof of~\cref{prop:elbow_solution_set}]
    We wish to characterize the set 
    \begin{equation*}
        R(x, k) := \cbr{t: k \leq r\left(\simpute{x}{t}^k\right)},
    \end{equation*}
    where the function $r(\cdot)$ is defined in~\eqref{equation:elbow_rule}. Recall that for our quantities to be well defined for all possible elbow-selected values $k \in \{1, \dots, p-2\}$, we have defined $\kappa_i(\sval{x}) = -\infty$ for $i \not \in \{2,\dots,p-1\}$. From~\eqref{equation:elbow_rule}, we immediately see that this set can be re-written as
     \begin{equation*}
        R(x, k) = \cbr{t: \max_{i \in \cbr{2, \dots, k}} \kappa_i\left(\simpute{x}{t}^k\right) \leq \max_{i \in \cbr{k+1, \dots, p-1}} \kappa_i\left(\simpute{x}{t}^k\right)},
    \end{equation*}
    since this is the event where the maximizer occurs at an index greater than $k$. To define $R(x, k)$ more explicitly, we define
    \begin{equation*}
        c_{1,k}(x) := 
        \begin{cases} 
          \underset{i \in \cbr{2, \dots, k-2}}{\max} \kappa_i\rbr{\sval{x}} & \text{if }k \geq 4 \\
          -\infty & \text{otherwise}
          \end{cases}
        \quad\text{and}\quad
        c_{2,k}(x) := \begin{cases} 
          \underset{i \in \cbr{k+2, \dots, p-1}}{\max} \kappa_i\rbr{\sval{x}} & \text{if } k \leq p-3 \\
          -\infty & \text{otherwise}
          \end{cases}
    \end{equation*}
    and, recalling the definition of $\simpute{x}{t}^k$ in~\cref{equation:s_impute}, rewrite it as
    \begin{equation*}
        R(x, k) = \cbr{t: \max\cbr{
            \kappa_{k-1}\rbr{\simpute{x}{t}^k}, 
            \kappa_{k}\rbr{\simpute{x}{t}^k}, 
            c_{1,k}(x) 
        }
        \leq
        \max\cbr{
            \kappa_{k+1}\rbr{\simpute{x}{t}^k},
            c_{2,k}(x) 
        }},
    \end{equation*}
    or equivalently as
    \begin{equation}
        \label{equation:two_part_t_solution}
        R(x, k) = A_k \cup B_k,
    \end{equation}
    where we define
    \begin{equation}\label{equation:AB}
    \begin{split}
        A_k &:= \cbr{t: \max\cbr{
                \kappa_{k-1}\rbr{\simpute{x}{t}^k},
                \kappa_{k}\rbr{\simpute{x}{t}^k},
                c_{1,k}(x) 
            }
            \leq
            \kappa_{k+1}\rbr{\simpute{x}{t}^k}
            }\\
        B_k &:= \cbr{t: \max\cbr{
                \kappa_{k-1}\rbr{\simpute{x}{t}^k},
                \kappa_{k}\rbr{\simpute{x}{t}^k},
                c_{1,k}(x) %:= \max_{i \in \cbr{2, \dots, k-2}} \kappa_i\rbr{\simpute{x}{t}^k}
            }
            \leq
            c_{2,k}(x)
            }.
    \end{split}
    \end{equation}
    How can we interpret $A_k$ and $B_k$? We can see that $A_k$ restricts the elbow from occurring at index $k+1$ while $B_k$ restricts the elbow from occurring at index $k+2$ or later. It is worth considering a couple of edge cases first.
    
    If $k=1$, then by definition $\kappa_{k-1}\rbr{\simpute{x}{t}^k}$, $\kappa_{k}\rbr{\simpute{x}{t}^k}$, and $c_{1,k}(x)$ are all negative infinity. Thus, since singular values are positive, $t$ can take any positive value. This is made obvious by the fact that we will always select an index greater than or equal to $1$ with the elbow rule.

    If $k=2$, then $\kappa_{k}\rbr{\simpute{x}{t}^k}$ is finite and contributes to a useful bound. The terms $\kappa_{k-1}\rbr{\simpute{x}{t}^k}$ and $c_{1,k}(x)$ are negative infinity and so, recalling the definition of the discrete second derivative in~\eqref{equation:discrete_deriv}, we can write
    \begin{equation}
        \begin{split}
            A_2 &:= \cbr{t:
                \kappa_{k}\rbr{\simpute{x}{t}^k}
            \leq
            \kappa_{k+1}\rbr{\simpute{x}{t}^k}
            }\\
            & = \cbr{t: \sval{x}_{k-1}^2 - 2t^2 + \sval{x}_{k+1}^2 \leq t^2 - 2 \sval{x}_{k+1}^2 + \sval{x}_{k+2}^2}\\
            & = \left[\frac13\rbr{\sval{x}_{k-1}^2 + 3\sval{x}_{k+1}^2 - \sval{x}_{k+2}^2}, \infty\right).
        \end{split}
    \end{equation}
    If $c_{2,k}(x)$ is finite, then $B_2$ is non-empty and we have
    \begin{equation}
        \begin{split}
        B_2 &:= \cbr{t:
                \kappa_{k}\rbr{\simpute{x}{t}^k}
            \leq
            c_{2,k}(x)}\\
            & = \cbr{t: \sval{x}_{k-1}^2 - 2t^2 + \sval{x}_{k+1}^2 \leq c_{2,k}(x)}\\
            & = \left[\frac12\rbr{\sval{x}_{k-1}^2 + \sval{x}_{k+1}^2 - c_{2,k}(x)}, \infty\right).
        \end{split}
    \end{equation}
    If $k \geq 3$, then the discrete derivatives of interest are all well defined and we can write $A_k$ in~\eqref{equation:AB} as the intersection of sets of solutions to the three inequalities:
    \begin{enumerate}[(i)]
        \item $\sval{x}_{k-2}^2 - 2\sval{x}_{k-1}^2 + t^2 = \kappa_{k-1}\rbr{\simpute{x}{t}^k} \leq \kappa_{k+1}\rbr{\simpute{x}{t}^k} = t^2 - 2 \sval{x}_{k+1}^2 + \sval{x}_{k+2}^2$,
        \item $\sval{x}_{k-1}^2 - 2t^2 + \sval{x}_{k+1}^2 = \kappa_{k}\rbr{\simpute{x}{t}^k} \leq \kappa_{k+1}\rbr{\simpute{x}{t}^k} = t^2 - 2 \sval{x}_{k+1}^2 + \sval{x}_{k+2}^2$,
        \item $c_{1,k}(x) \leq \kappa_{k+1}\rbr{\simpute{x}{t}^k} = t^2 - 2 \sval{x}_{k+1}^2 + \sval{x}_{k+2}^2$.
    \end{enumerate}
    Rearranging to solve for $t^2$ in each case ---- which is one-to-one with $t$ since singular values are non-negative by definition --- we first notice that the $t^2$ terms in (i) cancel, leaving us with the inequality
    \begin{equation}\label{equation:A_form}
        \sval{x}_{k-2}^2 - 2\sval{x}_{k-1}^2 \leq- 2 \sval{x}_{k+1}^2 + \sval{x}_{k+2}^2,
    \end{equation}
    which does not involve $t$. If this inequality does not hold, then $A_k$ in~\eqref{equation:AB} is the empty set. Otherwise, (ii) and (iii) can be combined into
    \begin{equation*}
        \max\cbr{\sval{x}_{k-1}^2 + 3\sval{x}_{k+1}^2 - \sval{x}_{k+2}^2, c_{1,k}(x) + 2\sval{x}_{k+1}^2 - \sval{x}_{k+2}^2}\leq t^2.
    \end{equation*}
    Note that in the event that $c_{1,k}(x) = -\infty$, only inequality (ii) will contribute to the lower bound of $t^2$.
    
    We likewise write $B_k$ in~\eqref{equation:AB} as the intersection of sets of solutions to the three inequalities:
    \begin{enumerate}[(i')]
        \item $c_{1,k}(x) \leq c_{2,k}(x)$ and $|c_{2,k}(x)|<\infty$.
        \item $\sval{x}_{k-2}^2 - 2\sval{x}_{k-1}^2 + t^2 = \kappa_{k-1}\rbr{\simpute{x}{t}^k} \leq c_{2,k}(x)$,
        \item $\sval{x}_{k-1}^2 - 2t^2 + \sval{x}_{k+1}^2 = \kappa_{k}\rbr{\simpute{x}{t}^k} \leq c_{2,k}(x)$.
    \end{enumerate}
    Similarly to before, if the inequality in (i') does not hold, then $B_k$ in~\eqref{equation:AB} is the empty set. Note that this also is true for the case $k=2$, as seen in the resulting proposition. Otherwise, (ii') and (iii') can be combined into
    \begin{equation}\label{equation:B_form}
        \frac12\rbr{\sval{x}_{k-1}^2 + \sval{x}_{k+1}^2 - c_{2,k}(x)} \leq t^2 \leq 2\sval{x}_{k-1}^2 - \sval{x}_{k-2}^2 + c_{2,k}(x).
    \end{equation}
    It follows that the set $R(x, k)$ is one or both of the intervals $A_k$ and $B_k$, depending on which of conditions (i) and (i') hold. From~\eqref{equation:A_form} and~\eqref{equation:B_form}, we can write $A_k$ and $B_k$ as explicit intervals as in the proposition statement.
    
    Finally, we note that at least one of (i) and (i') must hold. Otherwise, by \eqref{equation:AB}, $R(x, k)$ would be the union of two empty sets and hence itself the empty set. However, we know that $k \leq r(\sval{x})$ by definition of $k$ in the proposition.
      Therefore, $t = \sval{x}_k \in R(x, k)$, and thus the set $R(x, k)$ is non-empty. 
\end{proof}

\section{Additional simulation results}\label{sec:sim_results}
\renewcommand{\thefigure}{B\arabic{figure}}
\setcounter{figure}{0}

We provide extended results from the main text regarding the ZG rule, additional results using the discrete derivative elbow rule, and detail the simulation setups used for both rules.

\subsection{ZG rule}\label{sec:appendix_zg_rule}

The data generation procedure was outlined in~\cref{subsec:sim-datagen}. We choose the singular values  so that the elbow rule in question selects the true rank in the noiseless setting, i.e., $r(\sval{\Theta}) = \rank(\Theta)$. Under the ZG-rule, we set $\sval{\Theta} = (5, 4, 3, 2, 1, 0,\ldots,0)^{1/5} \times (np)^{1/4}$. This setup follows from the experimental setup of~\citet{choi_selecting_2017}.

\cref{fig:power_medium_rank} displays the results from~\cref{fig:zg-power}, but instead plotted against the signal-to-noise ratio $\pivot{X}_k / \sigma$.

\begin{figure}[!htb]
    \centering
    \includegraphics[width=\linewidth]{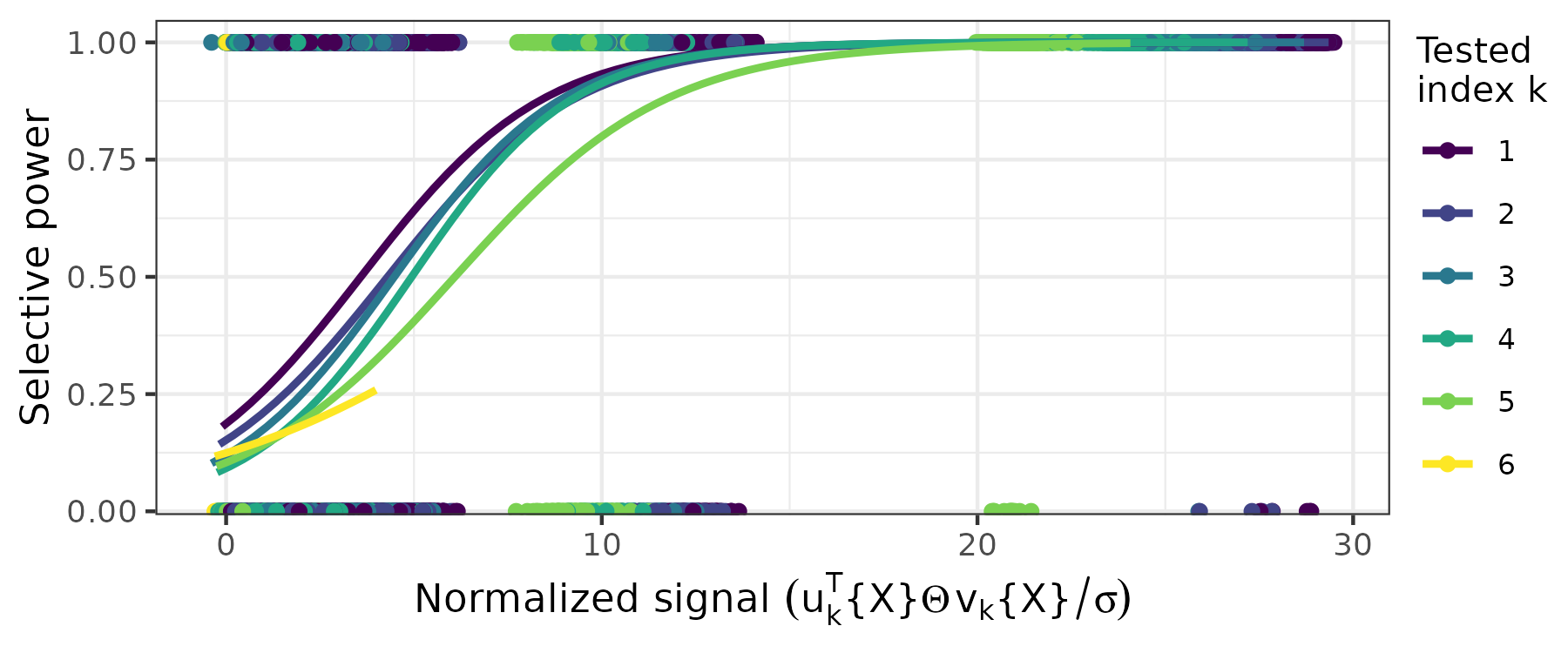}
    \caption{The simulation setting of~\cref{fig:zg-power}, but with power plotted versus the signal-to-noise ratio $\pivot{X}_k / \sigma$. Results are smoothed using a quasi-binomial GLM, and qualitatively agree with \cref{fig:zg-power}.} 
    \label{fig:power_medium_rank}
\end{figure}

In the following figures, we replicate the simulation results in the main text, but using an estimate of the variance described in \cref{sec:application}. \cref{fig:power_est_var} replicates \cref{fig:toy_example} and \cref{fig:zg-power}. \cref{fig:conf_coverage_var_est} replicates \cref{fig:zg-ci_coverage}. \cref{fig:conf_width_var_est} replicates \cref{fig:zg-ci_coverage_screeplot}. The takeaway is that using the variance estimate instead of the true variance preserves the validity of inference at the cost of slightly lower power.

\begin{figure}[!htb]
    \centering
    \includegraphics[width=0.43\linewidth, valign=t]{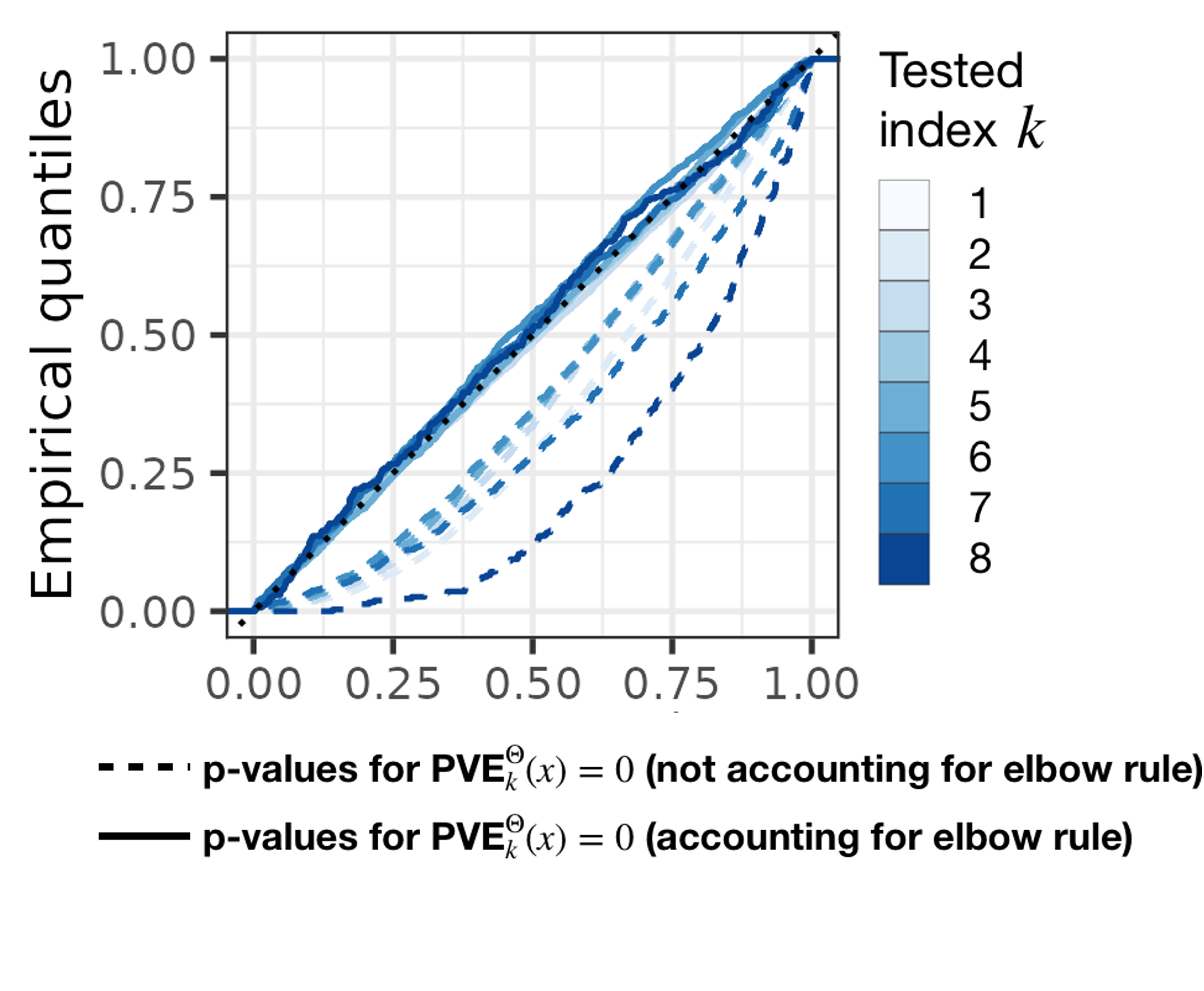}\includegraphics[width=0.57\linewidth, valign=t]{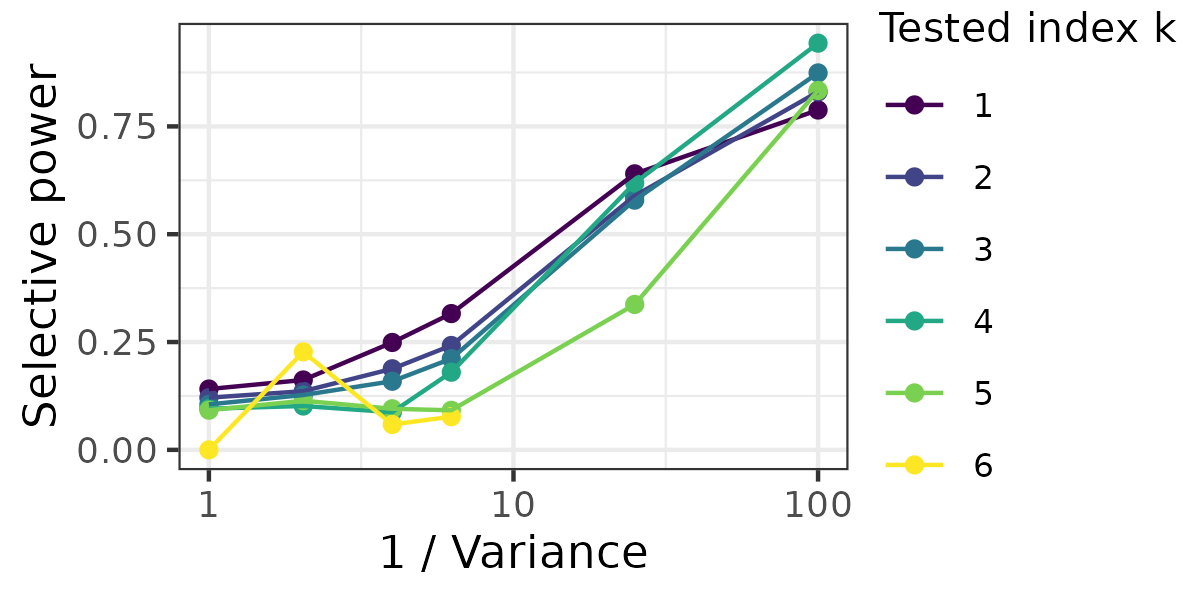}
    \caption{\textbf{(Left):} The simulation setting of~\cref{fig:toy_example}, but with $\sigma=0.1$ and an estimate of the variance instead of the true variance. The p-values of the test accounting for the elbow rule remain approximately uniformly distributed and thus properly control the Type 1 error.  \textbf{(Right):} The simulation setting of~\cref{fig:zg-power}, but with an esimate of the variance instead of true variance. This results in lower power.}
    \label{fig:power_est_var}
\end{figure}

\begin{figure}[!htb]
    \centering
    \includegraphics[width=\linewidth]{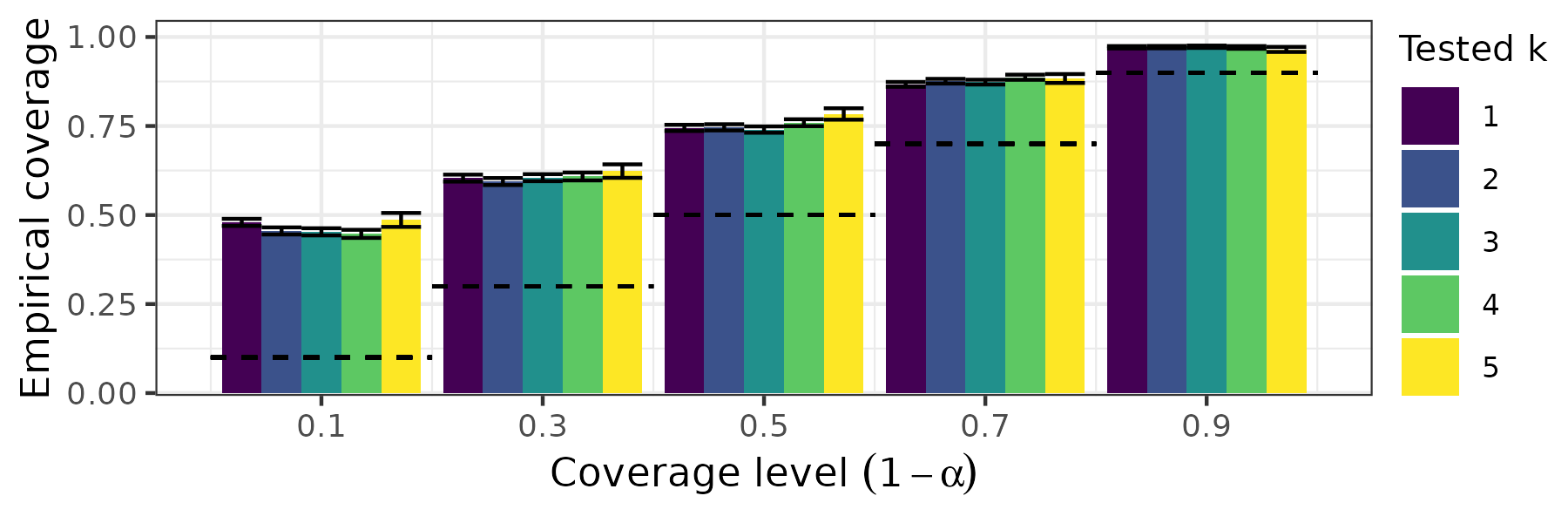}
    \caption{The simulation setting of \cref{fig:zg-ci_coverage}, but with an estimate of the variance instead of the true variance. The confidence intervals still attain the nominal coverage.}
    \label{fig:conf_coverage_var_est}
\end{figure}

\begin{figure}[!htb]
    \centering
    \includegraphics[width=\linewidth]{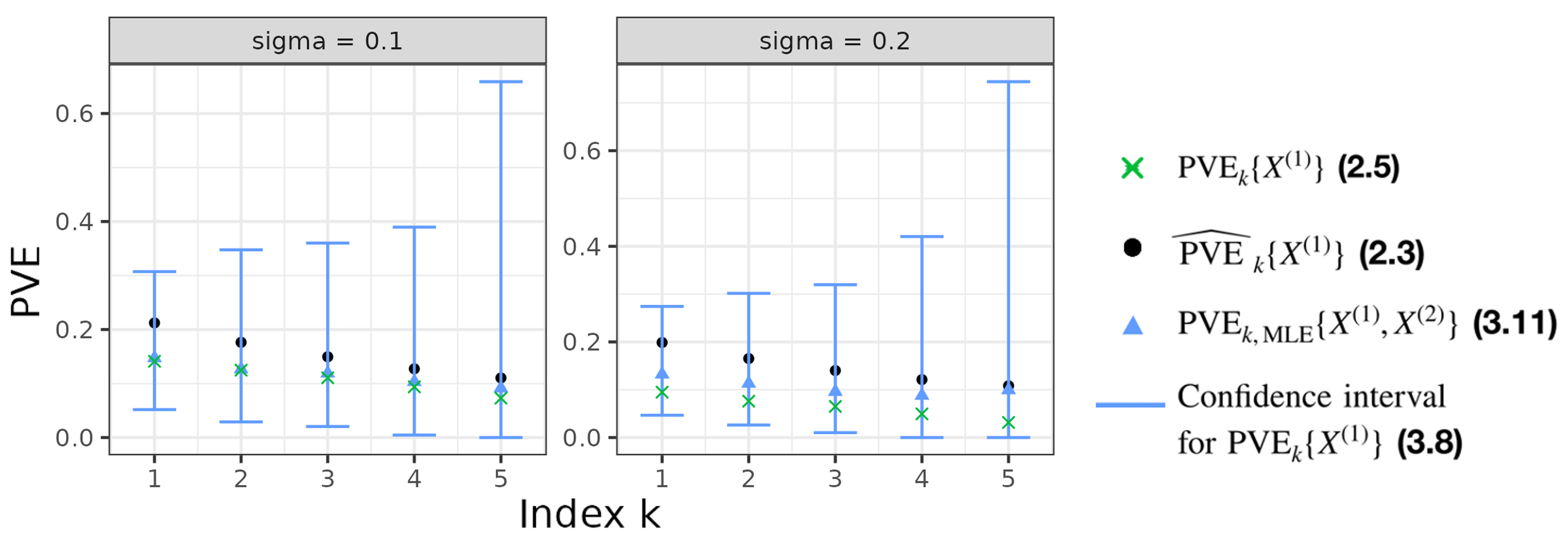}
    \caption{The simulation setting of \cref{fig:zg-ci_coverage_screeplot}, but using an estimate of the variance instead of the true variance. We see that the confidence intervals are wider when the variance is estimated.}
    \label{fig:conf_width_var_est}
\end{figure}

\subsection{Derivative rule}\label{sec:appendix_deriv_rule}

We provide simulation results here for hypothesis testing and confidence intervals under the derivative rule variant of the elbow-rule. Computation under this rule is analytically tractable, and results are qualitatively similar to those of the ZG rule.

We follow the data generation procedure outlined in~\cref{subsec:sim-datagen}. We choose the singular values under the alternative so that the elbow rule in question selects the true rank in the noiseless setting, i.e., $r(\sval{\Theta}) = \rank(\Theta)$. Under the derivative-rule, we set $\sval{\Theta} = (5, 4, 3, 2, 1, 0,\ldots,0)^{1/5} \times (np)^{1/4}$. This setup follows from the experimental setup of~\citet{choi_selecting_2017} and our choice of $n=50$ and $p=10$. Results are shown in Figures~\ref{fig:elbow-power}~-~\ref{fig:elbow-ci_coverage_screeplot}.

\begin{figure}[!htb]
    \centering
    \includegraphics{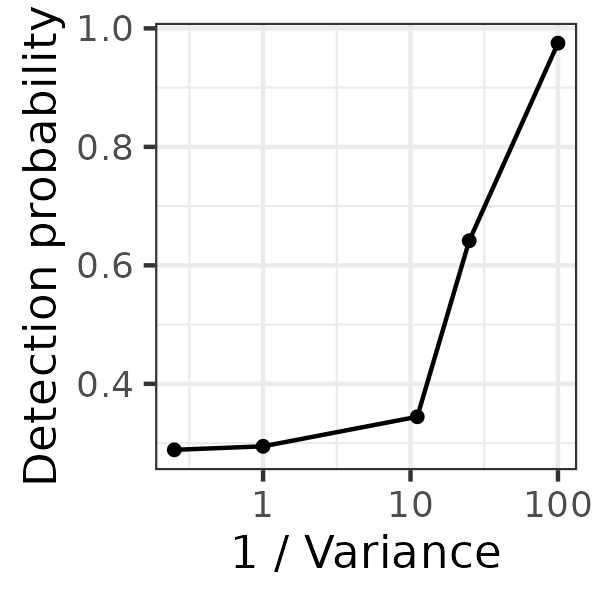}
    \includegraphics{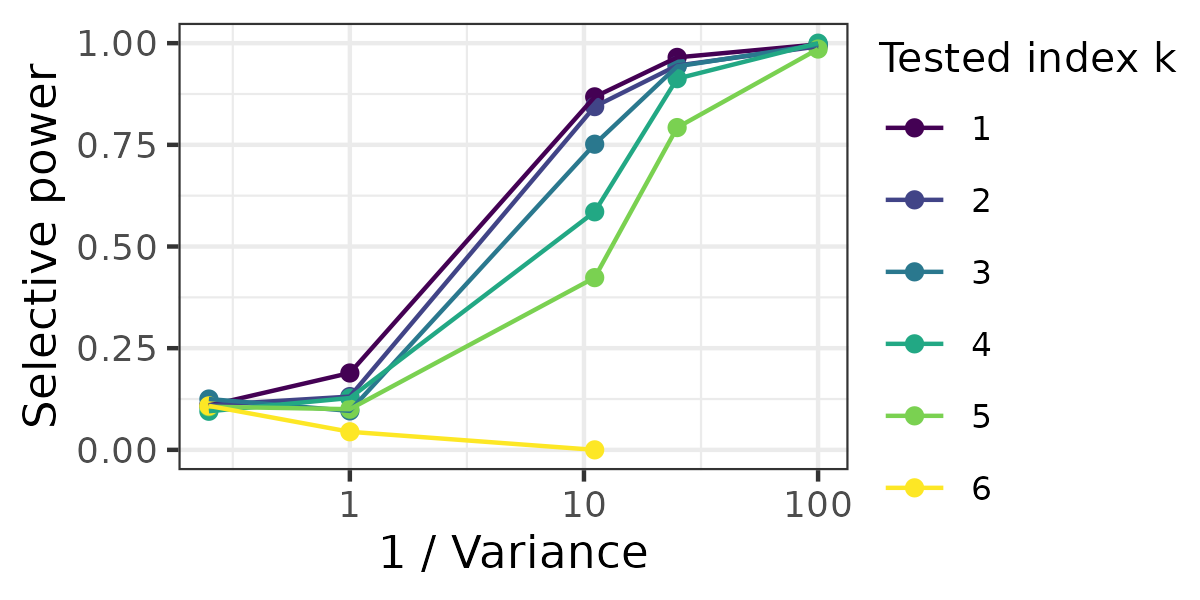}
    \caption{As in~\cref{fig:zg-power}, but with the discrete derivative elbow rule. The results are qualitatively the same.}
    \label{fig:elbow-power}
\end{figure}

\begin{figure}[!htb]
    \centering
    \includegraphics[width=\linewidth]{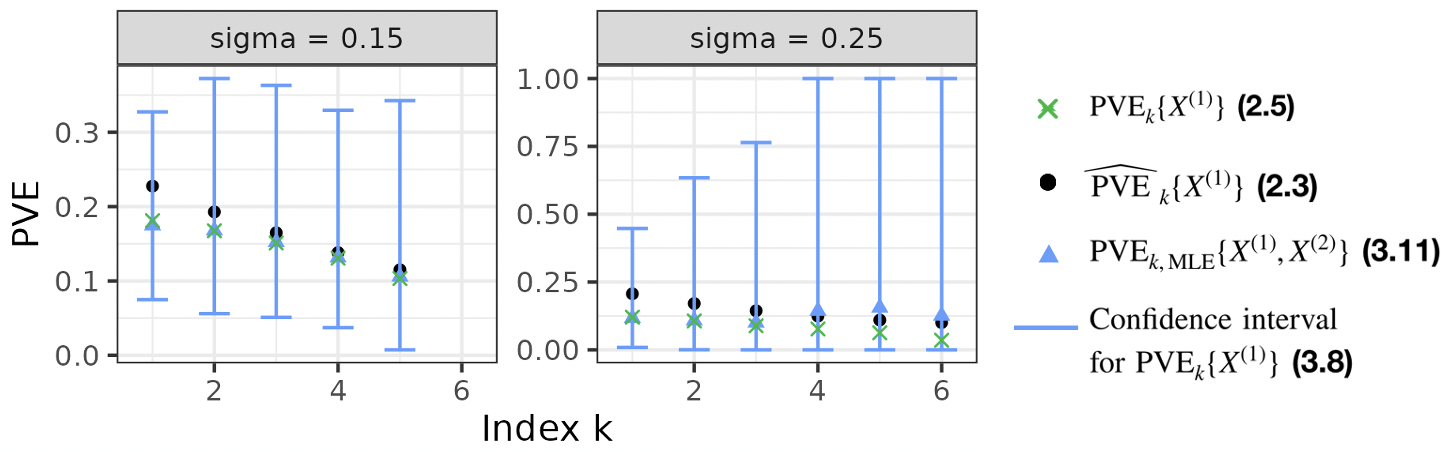}
    \caption{As in~\cref{fig:zg-ci_coverage_screeplot}, but with the discrete derivative elbow rule and $\sigma \in \cbr{0.15, 0.25}$. The results are qualitatively the same.}
    \label{fig:elbow-ci_coverage_screeplot}
\end{figure}

\end{document}